\documentclass[12pt]{article}
\usepackage{amsmath}
\usepackage{times}
\usepackage{graphicx}
\usepackage{color}
\usepackage{multirow}
\usepackage[authoryear]{natbib}
\usepackage{rotating}
\usepackage{bbm}
\usepackage{latexsym}
\interdisplaylinepenalty=10000

\usepackage{amssymb,amsthm}			            
\usepackage{bm,xspace,fancyhdr}					
\usepackage[tight,footnotesize]{subfigure}
\usepackage{epstopdf}                           
\usepackage{pdfsync}				            
\usepackage{setspace}
\graphicspath{{./figs/}}


\textheight 23.4cm
\textwidth 14.65cm
\oddsidemargin 0.375in
\evensidemargin 0.375in
\topmargin  -0.55in



\newcommand{\xmath}[1]	{\ensuremath{#1}\xspace}




\newcommand{\reals}	{\xmath{\mathbb{R}}}
\newcommand{\norm}[1]	{\xmath{\left|\left| {#1} \right|\right|}}

\newcommand{\expect}[1]	{\xmath{\mathcal{E}\left[{#1}\right]}}

\newcommand{\spnum}	{\xmath{K}} 
\newcommand{\spnumf}	{\xmath{K^{\ast}}} 
\newcommand{\nodenum}	{\xmath{M}}
\newcommand{\memlen}	{\xmath{N}}

\newcommand{\memlenf}	{\xmath{N^{\ast}}}
\newcommand{\conmat}	{\xmath{\bm{W}}}
\newcommand{\conmatev}	{\xmath{\bm{U}}}
\newcommand{\conmated}	{\xmath{\bm{D}}}

\newcommand{\ffvec}	{\xmath{\bm{z}}}

\newcommand{\ffvecu}	{\xmath{\widetilde{\bm{z}}}}
\newcommand{\tvar}	{\xmath{n}}
\newcommand{\nodevar}	{{x}}
\newcommand{\nodevec}	{\xmath{\bm{\nodevar}}}
\newcommand{\nodevect}[1]	{\xmath{\bm{\nodevar}[#1]}}

\newcommand{\sigvar}	{{s}}
\newcommand{\sigmax}	{\xmath{\sigvar_{\max}}}
\newcommand{\sigvec}	{\xmath{\bm{\sigvar}}}
\newcommand{\sigvest}	{\xmath{\widehat{\bm{\sigvar}}}}

\newcommand{\sigvart}[1]	{\xmath{\sigvar[#1]}}

\newcommand{\ripmat}	{\xmath{\bm{A}}}
\newcommand{\fftmat}	{\xmath{\bm{F}}}
\newcommand{\decvar}	{\xmath{q}}
\newcommand{\decmat}	{\xmath{\bm{Q}}}
\newcommand{\wavmat}	{\xmath{\bm{\Psi}}}
\newcommand{\zmat}	{\xmath{\widetilde{\bm{Z}}}}

\newcommand{\ripcd}	{\xmath{\delta}}
\newcommand{\ripcc}	{\xmath{C}}
\newcommand{\ripex}	{\xmath{\gamma}}

\providecommand{\defstart}{
\begin{tabular}{l|l|l} Macro & Symbol & Meaning \\ \hline }
\providecommand{\defstop}{\end{tabular}}


\newtheorem{lemma}{Lemma}[section]
\newtheorem{thm}{Theorem}[section]

\newcommand{\comps}	{\xmath{\mathbb{C}}}
\newcommand{\proba}[1]	{\xmath{\mathcal{P}\left[{#1}\right]}}


\newcommand{\captionfonts}{\normalsize}

\makeatletter  
\long\def\@makecaption#1#2{%
  \vskip\abovecaptionskip
  \sbox\@tempboxa{{\captionfonts #1: #2}}%
  \ifdim \wd\@tempboxa >\hsize
    {\captionfonts #1: #2\par}
  \else
    \hbox to\hsize{\hfil\box\@tempboxa\hfil}%
  \fi
  \vskip\belowcaptionskip}
\makeatother   

\begin{document}
\doublespace

\hspace{13.9cm}1

\ \vspace{20mm}\\

{\LARGE Short Term Memory Capacity in Networks via the Restricted Isometry Property}

\ \\
{\bf \large Adam S. Charles, Han Lun Yap, Christopher J. Rozell}\\
{School of Electrical and Computer Engineering, Georgia Institute of Technology, Atlanta, GA}\\
%
%


\thispagestyle{empty}
\markboth{}{NC instructions}

\begin{center} {\bf Abstract} \end{center}
Cortical networks are hypothesized to rely on transient network activity to support short term memory (STM).  In this paper we study the capacity of randomly connected recurrent linear networks for performing STM when the  input signals are approximately sparse in some basis.  We leverage results from compressed sensing to provide rigorous non-asymptotic recovery guarantees,  quantifying the  impact of the input sparsity level, the input sparsity basis, and the network characteristics on the system capacity.
Our analysis demonstrates that network memory capacities can scale superlinearly with the number of nodes, and in some situations can achieve STM capacities that are much larger than the network size.  We provide perfect recovery guarantees for finite sequences and recovery bounds for infinite sequences.  The latter analysis predicts that network STM systems may have an optimal recovery length that balances errors due to omission and recall mistakes. 
Furthermore, we show that the conditions yielding optimal STM capacity can be embodied in several network topologies, including networks with sparse or dense connectivities.

\section{Introduction}
\label{sec:intro}

-------------------------

Short term memory (STM) is critical for neural systems to understand non-trivial environments and perform complex tasks.  While individual neurons could potentially account for very long or very short stimulus memory (e.g., through changing synaptic weights or membrane dynamics, respectively), useful STM on the order of seconds is conjectured to be due to transient \emph{network} activity.  Specifically, stimulus perturbations can cause activity in a recurrent network long after the input has been removed, and recent research hypothesizes that cortical networks may rely on transient activity to support STM~\citep{JAE:2004,MAA:2002,buonomano2009state}.  

Understanding the role of memory in neural systems requires determining the fundamental limits of STM capacity in a network and characterizing the effects on that capacity of the network size, topology, and input statistics.  Various approaches to quantifying the STM capacity of linear~\citep{JAE:2001,SOM:2004,SOM:2008,hermans2010} and nonlinear~\citep{latham2013} recurrent networks have been used, often assuming Gaussian input statistics~\citep{JAE:2001,SOM:2004,hermans2010,latham2013}. These analyses show that even under optimal conditions, the STM capacity (i.e., the length of the stimulus able to be recovered) scales only linearly with the number of nodes in the network.  While conventional wisdom holds that signal structure could be exploited to achieve more favorable capacities, this idea has generally not been the focus of significant rigorous study.

Recent work in computational neuroscience and signal processing has shown that many signals of interest have statistics that are strongly non-Gaussian, with low-dimensional structure that can be exploited for many tasks.  In particular, sparsity-based signal models (i.e., representing a signal using  relatively few non-zero coefficients in a basis) have recently been shown to be especially powerful.  In the computational neuroscience literature, sparse encodings increase the capacity of associative memory models~\citep{BAU:1988} and are sufficient neural coding models to account for several properties of neurons in primary visual cortex (i.e., response preferences~\citep{OLS:1996} and nonlinear modulations~\citep{ZHU:2012}).  In the signal processing literature, the recent work in compressed sensing (CS)~\citep{CAN:2006b,SOM2012review} has established strong guarantees on sparse signal recovery from highly undersampled measurement systems.  

\citet{GAN:2010} have previously conjectured that the ideas of CS can be used to achieve STM capacities that exceed the number of network nodes in an orthogonal recurrent network when the inputs are sparse in the canonical basis (i.e. the input sequences have temporally localized activity).  While these results are compelling and provide a great deal of intuition, the theoretical support for this approach remains an open question as the results in~\citep{GAN:2010} use an asymptotic analysis on an approximation of the network dynamics to support empirical findings.  In this paper we establish a theoretical basis for CS approaches in network STM by providing rigorous non-asymptotic recovery error bounds for an exact model of the network dynamics and input sequences that are sparse in any general basis (e.g., sinusoids, wavelets, etc.).  Our analysis shows conclusively that the STM capacity can scale superlinearly with the number of network nodes, and quantifies the  impact of the input sparsity level, the input sparsity basis, and the network characteristics on the system capacity.  We provide both perfect recovery guarantees for finite inputs, as well as bounds on the recovery performance when the network has an arbitrarily long input sequence.  The latter analysis predicts that network STM systems based on CS may have an optimal recovery length that balances errors due to omission and recall mistakes.  Furthermore, we show that the structural conditions yielding optimal STM capacity in our analysis can be embodied in many different network topologies, including networks with both sparse and dense connectivities.



\section{Background}
\subsection{Short Term Memory in Recurrent Networks}

Since understanding the STM capacity of networked systems would lead to a better understanding of how such systems perform complex tasks, STM capacity has been studied in several network architectures, including  discrete-time networks~\citep{JAE:2001,SOM:2004,SOM:2008}, continuous-time networks~\citep{hermans2010,busing2010connectivity}, and spiking networks~\citep{MAA:2002,mayor2005signal,legenstein2007edge,latham2013}.  While many different analysis methods have been used, each tries to quantify the amount of information present in the network states about the past inputs.  For example, in one approach taken to study echo state networks (ESNs)~\citep{SOM:2004,SOM:2008,hermans2010}, this information preservation is quantified through the correlation between the past input and the current state. When the correlation is too low, that input is said to no longer be represented in the state. The results of these analyses conclude that for Gaussian input statistics, the number of previous inputs that are significantly correlated with the current network state is bounded by a linear function of the network size.

In another line of analysis, researchers have sought to directly quantify the degree to which different inputs lead to unique network states~\citep{JAE:2001,MAA:2002,legenstein2007edge,strauss2012design}.  In essence, the main idea of this work is that a one-to-one relationship between input sequences and the network states should allow the system to perform an inverse computation to recover the original input.  A number of specific properties have been proposed to describe the uniqueness of the network state with respect to the input.   In spiking liquid state machines (LSMs), in work by~\cite{MAA:2002}, a separability property is suggested that guarantees distinct network states for distinct inputs and follow up work~\citep{legenstein2007edge} relates the separability property to practical computational tasks through the Vapnik-Chervonenkis (VC) dimension~\citep{vapnik1971uniform}.  More recent work analyzing similar networks using separation properties~\citep{latham2013,busing2010connectivity} gives an upper bound for the STM capacity that scales like the logarithm of the number of network nodes. 

In discrete ESNs, the echo-state property (ESP) ensures that every network state at a given time is uniquely defined by some left-infinite sequence of inputs~\citep{JAE:2001}. The necessary condition for the ESP is that the maximum eigenvalue magnitude of the system is less than unity (an eigenvalue with a magnitude of one would correspond to a linear system at the edge of instability).  While the ESP ensures uniqueness, it does not ensure robustness and output computations can be sensitive to small perturbations (i.e., noisy inputs). A slightly more robust property  looks at the conditioning of the matrix describing how the system acts on an input sequence~\citep{strauss2012design}.  The condition number describes not only a one-to-one correspondence, but also quantifies how small perturbations in the input affect the output.   While work by~\cite{strauss2012design} is closest in spirit to the analysis in this paper, it ultimately concludes that the STM capacity still scales linearly with the network size.

Determining whether or not a system abides by one of the separability properties depends heavily on the network's construction. In some cases, different architectures can yield very different results.   For example, in the case of randomly connected spiking networks, low connectivity  (each neuron is connected to a small number of other neurons) can lead to large STM capacities~\citep{legenstein2007edge,busing2010connectivity}, whereas high connectivity leads to chaotic dynamics and smaller STM capacities~\citep{latham2013}.  In contrast, linear ESNs with high connectivities (appropriately normalized)~\citep{busing2010connectivity}  can have relatively large STM capacities (on the order of the number of nodes in the network)~\citep{SOM:2008,strauss2012design}.   Much of this work centers around using systems with orthogonal connectivity matrices, which leads to a topology that robustly preserves information. Interestingly, such systems can be constructed to have arbitrary connectivity while preserving the information preserving properties~\citep{strauss2012design}.

While a variety of networks have been analyzed using the properties described above, these analyses ignore any structure of the inputs sequences that could be used to improve the analysis~\citep{JAE:2001,mayor2005signal}. Conventional wisdom has suggested that STM capacities could be increased by exploiting structure in the inputs, but formal analysis has rarely addressed this case.  For example, work by~\cite{GAN:2010} builds significant intuition for the role of structured inputs in increasing STM capacity, specifically proposing to use the tools of CS to study the case when the input signals are temporally sparse.  However, the analysis by~\cite{GAN:2010} is asymptotic and focuses on an annealed (i.e., approximate) version of the system that neglects correlations between the network states over time.  The present paper can be viewed as a generalization of this work to provide formal guarantees for STM capacity of the exact system dynamics, extensions to arbitrary orthogonal sparsity bases, and recovery bounds when the input exceeds the capacity of the system (i.e., the input is arbitrarily long).

\begin{figure}[t]
	\begin{center}
		\includegraphics[width=5in]{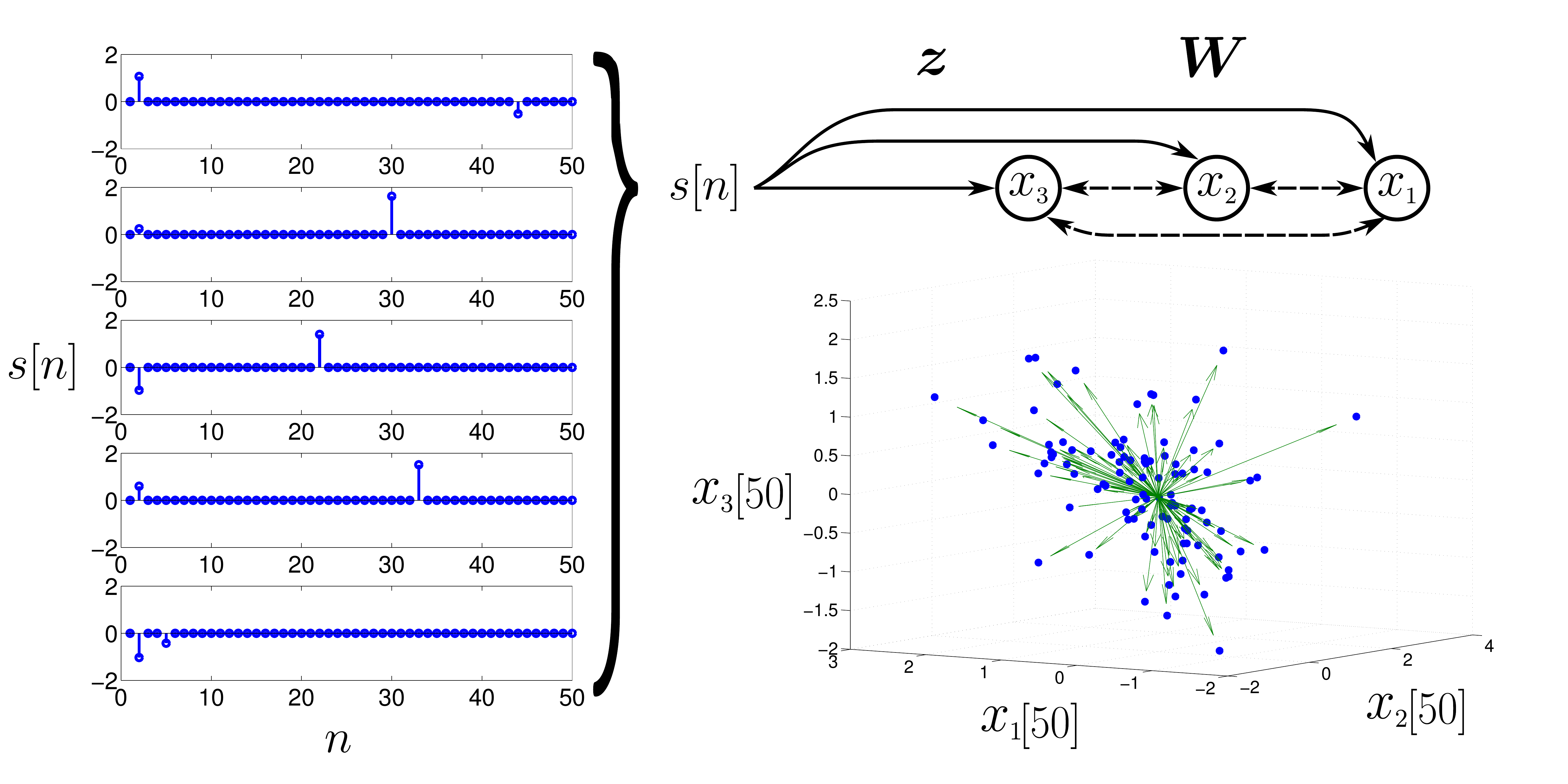}
	\end{center}
	\caption{The current state of the network encodes information about the stimulus history. Different stimuli (examples shown to the left), when perturbing the same system (in this figure a three neuron orthogonal network) result in distinct states \nodevec = $[x_1, x_2, x_3]^T$ at the current time ($n=50$). The current state is therefore informative for distinguishing between the input sequences.}
	\label{fig:NNfig}
\end{figure}

\subsection{Compressed Sensing}
\label{sec:CS}

In the CS paradigm, a signal $\sigvec\in\reals^{\memlen}$ is sparse in a basis $\wavmat$ so that it can be approximately written as $\sigvec \approx \wavmat\bm{a}$, where most of the entries in $\bm{a}\in\reals^{\memlen}$ are zero.  This signal is observed through measurements $\nodevec\in\reals^{\nodenum}$ taken via a compressive (e.g., $\nodenum \ll \memlen$) linear operator:
\begin{gather}
	\nodevec = \ripmat\sigvec + \bm{\epsilon}. \label{eqn:CSlinop}
\end{gather}
Sparse coefficients representing the signal are recovered by solving the convex optimization
\begin{gather}
	\widehat{\bm{a}} = \arg\min_{\bm{a}}\norm{\bm{a}}_1 \quad \mbox{such that} \quad \norm{\nodevec - \ripmat\wavmat\bm{a}}_2\leq \norm{\bm{\epsilon}}_2, \label{eqn:l1min}
\end{gather}
where $\norm{\bm{\epsilon}}_2$ is the magnitude of the measurement noise.  

There is substantial evidence from the signal processing and computational neuroscience communities that many natural signals are sparse in an appropriate basis~\citep{OLS:1996,ELA:2008}.
The recovery problem above requires that the system knows the sparsity basis \wavmat to perform the recovery, which neural systems may not know a priori.  We note that recent work has shown that appropriate sparsity bases can be learned from example data~\citep{OLS:1996}, even in the case where the system only observes the inputs through compressed measurements~\citep{isely2011deciphering}.
While the analysis doesn't depend on the exact method for solving the optimization in equation~\eqref{eqn:l1min}, we also note that this type of optimization can be solved in biologically plausible network architectures (e.g.,~\citep{ROZ:2008,RHE:2007,hu2012network,BAL:2012,BAL:2013,shapero2012jetcas}).    

The most common sufficient condition in CS for stable recovery is known as the Restricted Isometry Property (RIP)~\citep{CAN:2006}.  Formally, we say that RIP-(2\spnum, \ripcd) holds for \ripmat in the basis \wavmat if for any vector \sigvec that is 2\spnum-sparse in \wavmat we have that
\begin{gather}
	\ripcc\left(1-\ripcd\right) \leq \norm{\ripmat\sigvec}_2^2/\norm{\sigvec}_2^2 \leq \ripcc\left(1+\ripcd\right) \label{eqn:RIPstate}
\end{gather}
holds for constants $\ripcc>0$ and $0<\ripcd<1$.  Said another way, the RIP guarantees that all pairs of vectors that are \spnum-sparse in \wavmat have their distances preserved after projecting through the matrix \ripmat.  This can be seen by observing that for a pair of \spnum-sparse vectors, their difference has at most 2\spnum nonzeros.  In this way, the RIP can be viewed as a type of separation property for sparse signals that is similar in spirit to the separation properties used in previous studies of network STM~\citep{JAE:2001,SOM:2004,MAA:2002,hermans2010}.

When $\ripmat$ satisfies the RIP-(2\spnum, \ripcd) in the basis \wavmat with `reasonable' \ripcd (e.g. $\ripcd\leq\sqrt{2}-1$) and the signal estimate is $\widehat{\sigvec}=\wavmat\widehat{\bm{a}}$, canonical results
establish the following bound on signal recovery error:
\begin{gather}
	\norm{\sigvec - \widehat{\sigvec}}_2 \leq \alpha \norm{\bm{\epsilon}}_2 + \beta\frac{\norm{\wavmat^T \left(\sigvec - \sigvec_{\spnum}\right)}_1}{\sqrt{\spnum}},  		\label{eqn:DecRec}
\end{gather}
where $\alpha$ and $\beta$ are constants and $\sigvec_{\spnum}$ is the best \spnum-term approximation to \sigvec in the basis $\wavmat$ (i.e., using the \spnum largest coefficients in $\bm{a}$)~\citep{CAN:2006c}. 
Equation~\eqref{eqn:DecRec} shows that signal recovery error is determined by the magnitude of the measurement noise and sparsity of the signal.  In the case that the signal is exactly \spnum-sparse and there is no measurement noise, this bound guarantees perfect signal recovery.  

While the guarantees above are deterministic and non-asymptotic, the canonical CS results state that measurement matrices generated randomly from ``nice'' independent distributions (e.g., Gaussian, Bernoulli) can satisfy RIP with high probability when $\nodenum=O(\spnum \log \memlen)$~\citep{RAH:2010}.  For example, random Gaussian measurement matrices (perhaps the most highly used construction in CS) satisfy the RIP condition for any sparsity basis with probability $1-O(1/\memlen)$ when $\nodenum \geq C\ripcd^{-2}\spnum\log\left(\memlen\right)$.
This extremely favorable scaling law (i.e., linear in the sparsity level) for random Gaussian matrices is in part due to the fact that Gaussian matrices have many degrees of freedom, resulting in \nodenum statistically independent observations of the signal. 
In many practical examples, there exists a high degree of structure in \ripmat that causes the measurements to be correlated.  Structured measurement matrices with correlations between the measurements have been recently studied due to their computational advantages.  While these matrices can still satisfy the RIP, they typically require more measurement to reconstruct a signal with the same fidelity and the performance may change depending on the sparsity basis (i.e., they are no longer ``universal'' because they don't perform equally well for all sparsity bases). One example which arises often in the signal processing community is the case of random circulant matrices~\citep{krahmer2012suprema}, where the number of measurements needed to assure that the RIP holds with high probability for temporally sparse signals (i.e., \wavmat is the identity) increases to $\nodenum \geq C\ripcd^{-2}\spnum\log^{4}\left( \memlen \right)$.
Other structured systems analyzed in the literature include Toeplitz matrices~\citep{haupt2010toeplitz}, partial circulant matrices~\citep{krahmer2012suprema}, block diagonal matrices~\citep{EFT:2012,park2011concentration}, subsampled unitary matrices~\citep{bajwa2009urip}, and randomly subsampled Fourier matrices~\citep{VER:2008}. 
These types of results are used to demonstrate that signal recovery is possible with highly undersampled measurements, where the number of measurements scales linearly with the ``information level'' of the signal (i.e., the number of non-zero coefficients) and only logarithmically with the ambient dimension.  

\section{STM Capacity using the RIP}

\subsection{Network Dynamics as Compressed Sensing}
\label{sec:NetworkDynamics}

We consider the same discrete-time ESN model used in previous studies~\citep{JAE:2001,SOM:2008,GAN:2010,SOM:2004}:
\begin{gather}
	\nodevect{\tvar} = f\left(\conmat\nodevect{\tvar-1} + \ffvec\sigvart{\tvar} + \widetilde{\bm{\epsilon}}[\tvar]\right), \label{eqn:sysEvolve} 
\end{gather}
where $\nodevect{\tvar}\in\reals^\nodenum$ is the network state at time $\tvar$, \conmat is the ($\nodenum \times \nodenum$) recurrent (feedback) connectivity matrix, $\sigvart{\tvar}\in\reals$ is the input sequence at time $\tvar$, $\ffvec$ is the ($\nodenum \times 1$) projection of the input into the network, $\widetilde{\bm{\epsilon}}[\tvar]$ is a potential network noise source, and $f : \reals^{\nodenum} \rightarrow \reals^{\nodenum}$ is a possible pointwise nonlinearity.  
As in previous studies~\citep{JAE:2001,SOM:2004,SOM:2008,GAN:2010}, this paper will consider the STM capacity of a linear network (i.e., $f\left(\nodevec\right) = \nodevec$).

The recurrent dynamics of Equation~\eqref{eqn:sysEvolve} can be used to write the network state at time \memlen :
\begin{gather}
	\nodevect{\memlen} = \ripmat\sigvec + {\bm{\epsilon}},	\label{eqn:sumMat} 
\end{gather}
where $\ripmat$ is a $\nodenum \times \memlen$ matrix, the $k^{\mbox{th}}$ column of \ripmat is $\conmat^{k-1}\ffvec$, $\sigvec = [\sigvar[\memlen], \dots, \sigvar[1]]^T$, the initial state of the system is $\nodevect{0} = 0$, and ${\bm{\epsilon}}$ is the node activity not accounted for by the input stimulus (e.g. the sum of network noise terms ${\bm{\epsilon}} = \sum_{k=1}^{\memlen}\conmat^{\memlen-k} \widetilde{\bm{\epsilon}}[k]$).  With this network model, we assume that
the input sequence \sigvec is $\spnum$-sparse in an orthonormal basis \wavmat 
(i.e., there are only $\spnum$ nonzeros in $\bm{a}=\wavmat^T \sigvec$).

\subsection{STM Capacity of Finite-Length Inputs} 
\label{sec:FiniteSTM}

We first consider the STM capacity of a network with finite-length inputs, where a length-\memlen input signal drives a network and the current state of the \nodenum network nodes at time \memlen is used to recover the input history via Equation~\eqref{eqn:l1min}.  If \ripmat derived from the network dynamics satisfies the RIP for the sparsity basis \wavmat, the bounds in Equation~\eqref{eqn:DecRec} establish strong guarantees on recovering \sigvec from the current network states $\nodevect{\memlen}$.  Given the significant structure in \ripmat, it is not immediately clear that any network construction can result in \ripmat satisfying the RIP.   
However, the structure in \ripmat is very regular and in fact only depends on powers of \conmat applied to \ffvec: 
\begin{gather}
	\bm{A} = \left[ \begin{matrix} \bm{z} & | & \bm{W}\bm{z} & | & \bm{W}^2\bm{z} & | & \hdots & | & \bm{W}^{N-1}\bm{z} \end{matrix} \right].  \nonumber
\end{gather}
Writing the eigendecomposition of the recurrent matrix $\conmat = \conmatev\conmated\conmatev^{-1}$, we re-write the measurement matrix as 
\begin{gather}
	\bm{A} = \bm{U}\left[ \begin{matrix} \tilde{\bm{z}} & | & \bm{D}\tilde{\bm{z}} & | & \bm{D}^2\tilde{\bm{z}} & | & \hdots & | & \bm{D}^{N-1}\tilde{\bm{z}} \end{matrix} \right], \nonumber
\end{gather}
where $\tilde{\bm{z}} = \bm{U}^{-1}\bm{z}$. Rearranging, we get
\begin{eqnarray}
	\bm{A} &=& \bm{U}\widetilde{\bm{Z}}\left[ \begin{matrix} \bm{d}^0 & | & \bm{d} & | & \bm{d}^2 & | & \hdots & | & \bm{d}^{N-1} \end{matrix} \right]
	= \bm{U}\widetilde{\bm{Z}} \bm{F} \label{eqn:sumMat2}
\end{eqnarray}
where $\fftmat_{k,l} = d_k^{l-1}$ is the $k^{\mbox{th}}$ eigenvalue of \conmat raised to the $(l-1)^{\mbox{th}}$ power and $\zmat = \mbox{diag}\left(\conmatev^{-1}\ffvec\right)$.

While the  RIP conditioning of \ripmat depends on all of the matrices in the decomposition of Equation~\ref{eqn:sumMat2}, the conditioning of \fftmat is the most challenging because it is the only matrix that is compressive (i.e., not square).  Due to this difficulty, we start by specifying a network  structure for \conmatev and \zmat that preserves the conditioning properties of \fftmat (other network constructions will be discussed in Section~\ref{sec:NetProp}).  Specifically, as in~\citep{SOM:2004,SOM:2008,GAN:2010} we choose \conmat to be a random orthonormal matrix,  assuring that the eigenvector matrix \conmatev has orthonormal columns and preserves the conditioning properties of \fftmat.  Likewise, we choose the feed-forward vector \ffvec to be $\ffvec = \frac{1}{\sqrt{\nodenum}}\conmatev\bm{1}_{\nodenum}$, where $\bm{1}_{\nodenum}$ is a vector of \nodenum ones (the constant $\sqrt{\nodenum}$ simplifies the proofs but has no bearing on the result).  This choice for \ffvec assures that \zmat is the identity matrix scaled by $\sqrt{\nodenum}$ (analogous to~\citep{SOM:2008} where \ffvec is optimized to maximize the SNR in the system).
Finally, we observe that the richest information preservation apparently arises for a real-valued \conmat when its  eigenvalues are complex, distinct in phase, have unit magnitude, and appear in complex conjugate pairs.  

For the above network construction, our main result shows that \ripmat satisfies the RIP in the basis \wavmat (implying the bounds from Equation~\eqref{eqn:DecRec} hold) when the network size scales linearly with the sparsity level of the input.  This result is made precise in the following theorem: 
\begin{thm}
	\label{thm:STMbasic}
	Suppose $\memlen \ge \nodenum$, $\memlen \ge \spnum$ and $\memlen \ge O(1)$.\footnote{The notation $\memlen \ge O(1)$ means that $\memlen \ge C$ for some constant $C$. For clarity, we do not keep track of the constants in our proofs. The interested reader is referred to~\citep{RAH:2010} for specific values of the constants.} 
Let $\conmatev$ be any unitary matrix of eigenvectors (containing complex conjugate pairs) and set $\ffvec = \frac{1}{\sqrt{\nodenum}}\conmatev\bm{1}_{\nodenum}$ so that $\zmat = \mbox{diag}\left(\conmatev^{-1}\ffvec\right) = \frac{1}{\sqrt{\nodenum}} \bm{I}$. For $\nodenum$ an even integer, denote the eigenvalues of \conmat by $\{e^{j w_m}\}_{m = 1}^{\nodenum}$.  
	Let the first $\nodenum/2$ eigenvalues $\left(\{e^{j w_m}\}_{m = 1}^{\nodenum/2}\right)$ be chosen uniformly at random on the complex unit circle (i.e., we chose $\{w_m\}_{m=1}^{\nodenum/2}$ uniformly at random from $[0, 2\pi)$) and the other $\nodenum/2$ eigenvalues as the complex conjugates of these values (i.e., for $\nodenum/2 < m \le \nodenum$, $e^{j w_m} = e^{-j w_{m-\nodenum/2}}$). 
Under these conditions, for a given RIP conditioning $\ripcd < 1$ and failure probability $\eta$, if 
\begin{gather}
	\nodenum \geq C\frac{\spnum}{\ripcd^2}\mu^2\left(\wavmat\right)\log^{4}\left(\memlen\right) \log(\eta^{-1}), \label{eqn:NMrelate}
\end{gather}
for a universal constant $C$, then for any $\sigvec$ that is $\spnum$-sparse (i.e., has no more than \spnum non-zero entries)
\begin{gather}
	\left(1-\ripcd\right) \leq \norm{\ripmat\wavmat\sigvec}_2^2/\norm{\sigvec}_2^2 \leq \left(1+\ripcd\right) \nonumber
\end{gather}
with probability exceeding $1- \eta$. 
\end{thm}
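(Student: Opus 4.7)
The plan is to reduce the theorem to the standard restricted isometry guarantee for a randomly sampled bounded orthonormal system (BOS). Starting from the decomposition in~\eqref{eqn:sumMat2}, substitute $\zmat = \frac{1}{\sqrt{\nodenum}}\bm{I}$ to obtain $\ripmat = \frac{1}{\sqrt{\nodenum}}\,\conmatev\fftmat$. Because $\conmatev$ is unitary, $\norm{\ripmat\wavmat\sigvec}_2 = \frac{1}{\sqrt{\nodenum}}\norm{\fftmat\wavmat\sigvec}_2$ for every $\sigvec$, so it is enough to show that $\frac{1}{\sqrt{\nodenum}}\fftmat$ satisfies RIP-$(2\spnum,\ripcd)$ in the basis $\wavmat$.

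Next, I would view $\fftmat$, whose entries are $\fftmat_{k,l}=e^{j(l-1)w_k}$, as a random sample from a BOS. The complex exponentials $\phi_l(w)=e^{j(l-1)w}$, $l=1,\ldots,\memlen$, form an orthonormal family with respect to the uniform probability measure on $[0,2\pi)$ with pointwise bound one. Composing with the unitary $\wavmat$ gives a second orthonormal family $\psi_j(w) = \sum_l \wavmat_{l,j}\,e^{j(l-1)w}$ whose BOS bound $\sup_{j,w}|\psi_j(w)|$ is precisely the Fourier coherence $\mu(\wavmat)$ appearing in the statement. Rauhut's RIP theorem for randomly sampled BOS~\citep{RAH:2010} then gives RIP-$(2\spnum,\ripcd)$ with probability at least $1-\eta$ as soon as $\nodenum$ independent uniform frequencies are drawn and $\nodenum\gtrsim \ripcd^{-2}\spnum\,\mu^2(\wavmat)\,\log^4(\memlen)\,\log(\eta^{-1})$, which is exactly the hypothesis~\eqref{eqn:NMrelate}.

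The only wrinkle is that the construction does not draw all $\nodenum$ frequencies independently; only $w_1,\ldots,w_{\nodenum/2}$ are independent uniform on $[0,2\pi)$, while $w_{\nodenum/2+m} = -w_m$ is forced. Since the physical signal $\wavmat\sigvec$ is real, the inner products of $\phi(w_m)$ and $\phi(-w_m)$ against $\wavmat\sigvec$ are complex conjugates and therefore equal in modulus, so $\norm{\fftmat\wavmat\sigvec}_2^2 = 2\sum_{m=1}^{\nodenum/2} |\ip{\phi(w_m)}{\wavmat\sigvec}|^2$. Hence an RIP bound obtained by applying the BOS theorem to the $\nodenum/2$ genuinely independent rows transfers immediately to the full matrix $\frac{1}{\sqrt{\nodenum}}\fftmat$, with the factor of two absorbed into the universal constant $C$ in~\eqref{eqn:NMrelate}.

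The main obstacle is the bookkeeping in the second step: identifying $\sup_{j,w}|\psi_j(w)|$ with the coherence $\mu(\wavmat)$ under the correct continuous-Fourier normalization, and threading the constants through Rauhut's theorem so that the $\log^4(\memlen)$ factor and the $\ripcd^{-2}\mu^2(\wavmat)\log(\eta^{-1})$ prefactor in~\eqref{eqn:NMrelate} come out exactly as stated. The conjugate-pair reduction and the unitary absorption are essentially cosmetic; the real content of the argument is recognizing that the network measurement operator \emph{is} a randomly subsampled Fourier-like BOS, for which off-the-shelf compressed-sensing machinery already yields the desired RIP.
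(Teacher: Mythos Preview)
Your proposal is correct and is essentially the paper's own approach: reduce via the unitary $\conmatev$ to $\frac{1}{\sqrt{\nodenum}}\fftmat\wavmat$, recognize this as a randomly subsampled bounded orthonormal system with BOS constant $\mu(\wavmat)$, and apply Rauhut's RIP machinery for such systems. The paper differs only cosmetically: it reproves Rauhut's argument inline (symmetrization, the Lemma~8.2 moment bound, moment-to-tail conversion) rather than citing it as a black box, and it handles the conjugate-pair constraint by splitting $\bm{B}=\bm{B}_1+\bm{B}_1^{*}$ and bounding each half, whereas you use the equivalent observation that conjugate rows contribute identically on real signals $\wavmat\sigvec$.
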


The proof of this statement is given in Appendix~\ref{app:BasicRIP} and follows closely the approach in~\citep{RAH:2010} by generalizing it to both include any basis \wavmat and account for the fact that \conmat is a real-valued matrix.
The quantity $\mu\left(\cdot\right)$ (known as the coherence) captures the largest inner product between the sparsity basis and the Fourier basis, and is calculated as:
\begin{gather}
	\mu\left(\wavmat\right) = \max_{n=1,\hdots,N}\sup_{t\in[0,2\pi]} \left|\sum_{m = 0}^{N-1}\wavmat_{m,n}e^{-jtm} \right|.
\end{gather}
In the result above, the coherence is lower  (therefore the STM capacity is higher) when the sparsity basis is more ``different'' from the Fourier basis.

The main observation of the result above is that STM capacity scales superlinearly with network size.  Indeed, for some values of \spnum and  $\mu\left(\wavmat\right)$ it is possible to have STM capacities much greater than the number of nodes (i.e., $\memlen \gg \nodenum$).  To illustrate the perfect recovery of signal lengths beyond the network size, Figure~\ref{fig:Recfig} shows an example recovery of a single long input sequence.  Specifically, we generate a 100 node random orthogonal connectivity matrix \conmat and generate $\ffvec=\frac{1}{\sqrt{\nodenum}}\conmatev\bm{1}_{\nodenum}$. We then drive the network with an input sequence that is 480 samples long and constructed 
using 24 non-zero coefficients (chosen uniformly at random) of a wavelet basis. The values at the non-zero entries were chosen uniformly in the range [0.5,1.5]. In this example we omit noise so that we can illustrate the noiseless recovery. At the end of the input sequence, the resulting 100 network states are used to solve the optimization problem in Equation~\ref{eqn:l1min} for recovering the input sequence (using the network architecture in~\citep{ROZ:2008}). The recovered sequence, as depicted in Figure~\ref{fig:Recfig}, is identical to the input sequence, clearly indicating that the 100 nodes were able to store the 480 samples of the input sequence (achieving STM capacity higher than the network size).

Directly checking the RIP condition for specific matrices is NP-hard (one would need to check every possible $2\spnum$-sparse signal).  In light of this difficulty in verifying recovery of all possible sparse signals (which the RIP implies), we will explore the qualitative behavior of the RIP bounds above by examining in Figure~\ref{fig:Phasefig} the average recovery relative MSE (rMSE) in simulation for a network with \nodenum nodes when recovering input sequences of length \memlen with varying sparsity bases.  Figure~\ref{fig:Phasefig} uses a plotting style similar to the Donoho-Tanner phase transition diagrams~\citep{DON:2005} where the average recovery rMSE is shown for each pair of variables under noisy conditions. While the traditional Donoho-Tanner phase transitions plot noiseless recovery performance to observe the threshold between perfect and imperfect recovery, here we also add noise to illustrate the stability of the recovery guarantees.
The noise is generated as random additive Gaussian noise at the input ($\widetilde{\bm{\epsilon}}$ in Equation~\eqref{eqn:sysEvolve}) to the system with zero mean and variance such that the total noise in the system ($\epsilon$ in Equation~\eqref{eqn:sumMat}) has a norm of approximately 0.01.  
To demonstrate the behavior of the system, the phase diagrams in Figure~\ref{fig:Phasefig} sweep the ratio of measurements to the total signal length (\nodenum/\memlen) and the ratio of the signal sparsity to the number of measurements (\spnum/\nodenum). Thus at the upper left hand corner, the system is recovering a dense signal from almost no measurements (which should almost certainly yield poor results) and at the right hand edge of the plots the system is recovering a signal from a full set of measurements (enough to recover the signal well for all sparsity ranges). 
We generate ten random ESNs for each combination of ratios (\nodenum/\memlen, \spnum/\nodenum). The simulated networks are driven with input sequences that are  sparse in one of four different bases (Canonical, Daubechies-10 wavelet, Symlet-3 wavelet and DCT) which have varying coherence with the Fourier basis.  We use the node values at the end of the sequence to recover the inputs.\footnote{For computational efficiency, we use the TFOCS software package~\citep{tfocs2011} to solve the optimization problem in Equation~\eqref{eqn:l1min} for these simulations.} 

In each plot of Figure~\ref{fig:Phasefig}, the dashed line denotes the boundary where the system is able to essentially perform perfect recovery (recovery error $\leq$ 1\%) up to the noise floor.  Note that the area under this line (the white area in the plot) denotes the region where the system is leveraging the sparse structure of the input to get capacities of $\memlen > \nodenum$.  We also observe that the dependence of the RIP bound on the coherence with the Fourier basis is clearly shown qualitatively in these plots, with the DCT sparsity basis showing much worse performance than the other bases.

\begin{figure}
	\begin{center}
		\includegraphics[width=.7\textwidth]{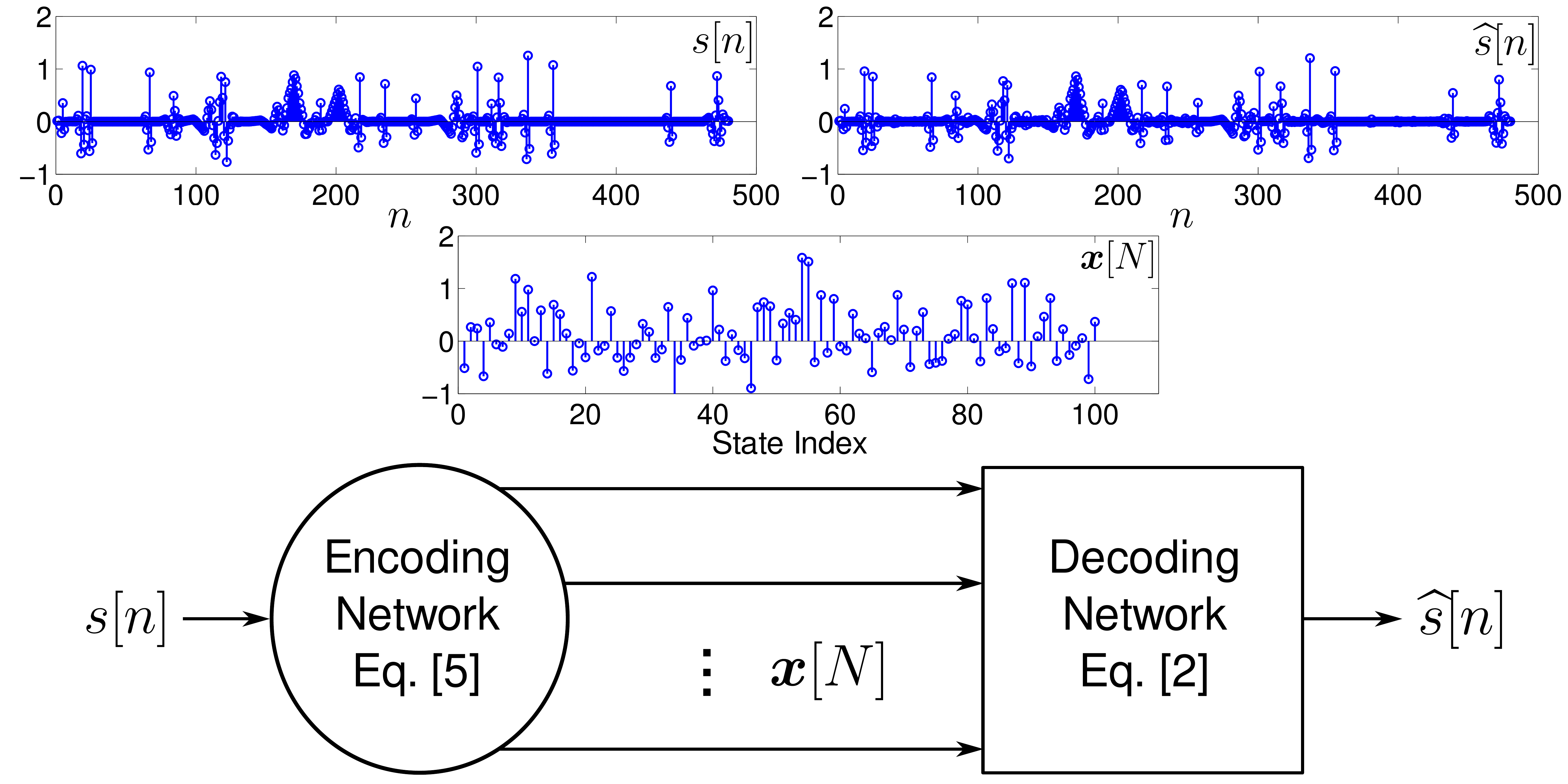}
	\end{center}
	\caption{A length 480 stimulus pattern (left plot) that is sparse in a wavelet basis drives the encoding network defined by a random orthogonal matrix \conmat and a feed-forward vector \ffvec. The 100 node values (center plot) are then used to recover the full stimulus pattern (right plot) using a decoding network which solves Equation~\eqref{eqn:l1min}.}
	\label{fig:Recfig}
\end{figure}

\begin{figure}
	\begin{center}
		\includegraphics[width=\textwidth]{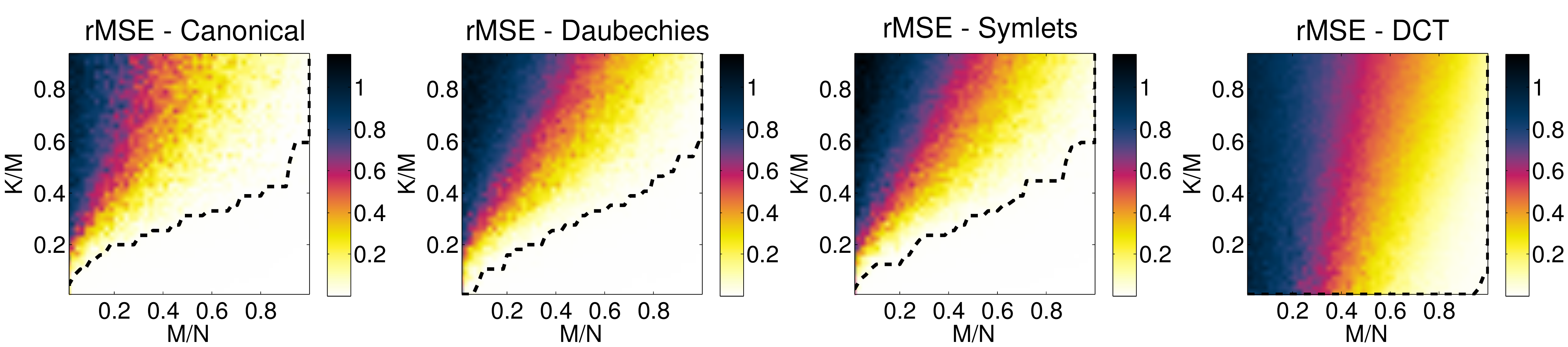}
	\end{center}
	\caption{Random orthogonal networks can have a STM capacity that exceeds the number of nodes. These plots depict the recovery relative mean square error (rMSE) for length-1000 input sequences  from \nodenum  network nodes where the input sequences are $\spnum$-sparse. Each figure depicts recovery for a given set of ratios $\nodenum/\memlen$ and $\spnum/\nodenum$. Recovery is near perfect (rMSE $\leq$ 1\%; denoted by the dotted line) for large areas of each plot (to the left of the \memlen = \nodenum boundary at the right of each plot) for sequences sparse in the canonical basis or various wavelet basis (shown here are 4 level decompositions in Symlet-3 wavelets and Daubechies-10 wavelets). For bases more coherent with the Fourier basis (e.g., discrete cosine transform-DCT), recovery performance above \memlen = \nodenum can suffer significantly. All the recovery here was done for noise such that $\|\bm{\epsilon}\|_2 \approx 0.01$.}
	\label{fig:Phasefig}
\end{figure}

\subsection{STM Capacity of Infinite-Length Inputs} 
\label{sub:InfiniteSTM}

After establishing the perfect recovery bounds for finite-length inputs in the previous section, we turn here to the more interesting case of a network that has received an input beyond its STM capacity (perhaps infinitely long).   In contrast to the finite-length input case where favorable constructions for \conmat used random unit-norm eigenvalues, this construction  would be unstable for infinitely long inputs.  In this case, we take \conmat to have all eigenvalue magnitudes equal to $\decvar<1$ to ensure stability.  The matrix constructions we consider in this section are otherwise identical to that described in the previous section.

In this scenario, the recurrent application of \conmat in the system dynamics assures that each input perturbation will decay steadily until it has zero effect on the network state.  While good for system stability, this decay means that each input will slowly recede into the past until  the network activity contains no useable memory of the event.  In other words, \emph{any} network with this decay can only hope to recover a proxy signal that accounts for the decay in the signal representation induced by the forgetting factor \decvar.  Specifically, we define this proxy signal to be $\decmat\sigvec$, where $\decmat = \mbox{diag}\left(\left[1, \decvar, \decvar^2, \hdots \right]\right)$.  Previous work~\citep{SOM:2008,JAE:2001,SOM:2004} has characterized recoverability by using statistical arguments to quantify the correlation of the node values to each past input perturbation.  In contrast, our approach is to provide recovery bounds on the rMSE for a network attempting to recover the \memlen past samples of $\decmat\sigvec$, which corresponds to the weighted length-\memlen history of \sigvec.  Note that in contrast to the previous section where we established the length of the input that can be perfectly recovered, the amount of time we attempt to recall (\memlen) is now a parameter that can be varied.

Our technical approach to this problem comes from observing that activity due to inputs older than \memlen acts as interference when recovering more recent inputs.  
In other words, we can group older terms (i.e., from farther back than \memlen time samples ago) with the noise term, resulting again in \ripmat being an \nodenum by \memlen linear operation that can satisfy RIP for length-\memlen inputs.  In this case, after choosing the length of the memory to recover, the guarantees in Equation~\eqref{eqn:DecRec} hold when considering every input older than \memlen as contributing to the ``noise'' part of the bound.  


Specifically, in the noiseless case where \sigvec is sparse in the canonical basis ($\mu\left(\bm{I}\right) = 1$) with a maximum signal value \sigmax, we can bound the first term of Equation~\eqref{eqn:DecRec} using a geometric sum that depends on \memlen, \spnum and \decvar.  
For a given scenario (i.e., a choice of \decvar, \spnum and the RIP conditioning of \ripmat), a network can support signal recovery up to a certain sparsity level \spnumf, given by: 
\begin{gather}
        \spnumf =  \frac{\nodenum\ripcd^2}{\ripcc\log^\ripex\left( \memlen \right)}, \label{eqn:decay_bound2}
\end{gather}
where $\ripex$ is a scaling constant (e.g., $\ripex=4$ using the present techniques, but $\ripex=1$ is conjectured~\citep{VER:2008}).
We can also bound the second term of Equation~\eqref{eqn:DecRec} by the sum of the energy in the past \memlen perturbations that are beyond this sparsity level \spnumf.  Together these terms yield the bound on the recovery of the proxy signal:
\begin{eqnarray}
        \norm{\decmat\sigvec - \decmat\sigvest}_2 & \leq & \beta\sigmax\norm{\conmatev}_2\left(\frac{\decvar^{\memlen}}{1-\decvar}\right)  \nonumber \\ 
	&  + & \frac{\beta\sigmax}{\sqrt{\min\left[\spnumf, \spnum\right]}}\left(\frac{\decvar^{\min\left[\spnumf, \spnum\right]} - \decvar^{\spnum}}{1-\decvar}\right)  \label{eqn:decay_bound} \\
	& + & \alpha \epsilon_{\max{}} \norm{\conmatev}_2 \left| \frac{\decvar}{1 - \decvar} \right|. \nonumber
\end{eqnarray}
The derivation of the first two terms in the above bound is detailed in Appendix~\ref{app:HistErr}, and the final term is simply the accumulated noise, which should have bounded norm due to the exponential decay of the eigenvalues of \conmat. 

Intuitively, we see that this approach implies the presence of an optimal value for the recovery length \memlen. For example, choosing \memlen too small means that there is useful signal information in the network that the system is not attempting to recover, resulting in omission errors (i.e., an increase in the first term of Equation~\eqref{eqn:DecRec} by counting too much signal as noise).  On the other hand, choosing \memlen too large means that the system is  encountering recall errors by trying to recover inputs with little or no residual information remaining in the network activity (i.e., an increase in the second term of Equation~\eqref{eqn:DecRec} from making the signal approximation worse by using the same number of nodes for a longer signal length).

The intuitive argument above can be made precise in the sense that the bound in Equation~\eqref{eqn:decay_bound} does have at least one local minimum for some value of $0<\memlen < \infty$.  First, we note that the noise term (i.e., the third term on the right side of Equation~\eqref{eqn:decay_bound}) does not depend on \memlen (the choice in origin does not change the infinite summation), implying that the optimal recovery length only depends on the first two terms.  We also note the important fact that \spnumf is non-negative and monotonically decreasing with increasing \memlen .  It is straightforward to observe that the bound in equation Equation~\eqref{eqn:decay_bound} tends to infinity as \memlen increases (due to the presence of \spnumf in the denominator of the second term).  Furthermore, for small values of \memlen, the second term in Equation~\eqref{eqn:decay_bound} is zero (due to $\spnumf > \spnum$), and the first term is monotonically decreasing with \memlen .  Taken together, since the function is continuous in \memlen, has negative slope for small \memlen and tends to infinity for large \memlen, we can conclude that it must have at least one local minima in the range $0<\memlen < \infty$.  This result predicts that there is (at least one) optimal value for the recovery length \memlen .

The prediction of an optimal recovery length above is based on the fact that the error bound in Equation~\eqref{eqn:decay_bound}), and it is possible that the error itself will not actually show this behavior (since the bound may not be tight in all cases).  To test the qualitative intuition from Equation~\eqref{eqn:decay_bound}, we simulate recovery of input lengths and show the results in Figure~\ref{fig:LIfig}.  Specifically, we generate 50 ESNs with 500 nodes and a decay rate of \decvar=0.999. The input signals are length-8000 sequences that have 400 nonzeros whose locations are chosen uniformly at random and whose amplitudes are chosen from a Gaussian distribution (zero mean and unit variance). After presenting the full 8000 samples of the input signal to the network, we use the network states to recover the input history with varying lengths and compared the resulting MSE to the bound in Equation~\eqref{eqn:decay_bound}. Note that while the theoretical bound may not be tight for large signal lengths, the recovery MSE matches the qualitative behavior of the bound by achieving a minimum value at $\memlen>\nodenum$. 

\begin{figure}
	\begin{center}
		\includegraphics[width=.9\textwidth]{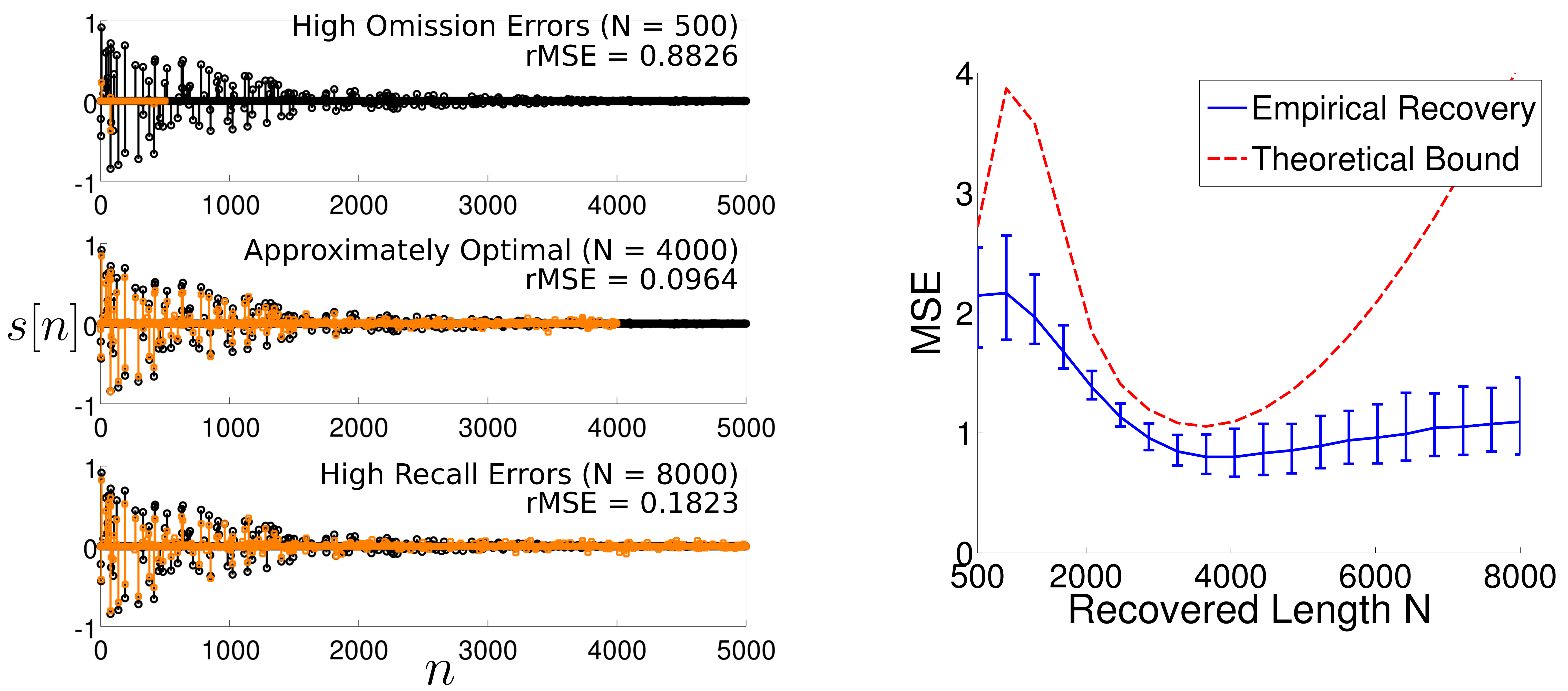}
	\end{center}
	\caption{The theoretical bound on the recovery error for the past \memlen perturbations to a network of size \nodenum has a minimum value at some optimal recovery length. This optimal value depends on the network size, the sparsity \spnum, the decay rate \decvar, and the RIP conditioning of \ripmat. Shown on the right is a simulation depicting the MSE for both the theoretical bound (red dashed line) and an empirical recovery for varying recovery lengths \memlen. In this simulation \spnum = 400, \decvar = 0.999, \nodenum = 500. The error bars for the empirical curve show the maximum and minimum MSE. On the left we show recovery (in orange) of a length-8000 decayed signal (in black) when recovering the past 500 (top), 4000 (middle), and 8000 (bottom) most recent perturbations. As expected, at \memlen = 4000 (approximately optimal) the recovery has the highest accuracy.}
	\label{fig:LIfig}
\end{figure}

\section{Other Network Constructions}
\label{sec:NetProp}

\subsection{Alternate Orthogonal Constructions}

Our results in the previous section focus on the case where \conmat is orthogonal and \ffvec  projects the signal evenly into all eigenvectors of \conmat. When either \conmat or \ffvec deviate from this structure the STM capacity of the network apparently decreases.  In this section we revisit those specifications, considering alternate network structures allowed under these assumptions as well as the consequences of deviating from these assumptions in favor of other structural advantages for a system (e.g., wire length, etc.).

To begin, we consider the assumption of orthogonal network connectivity, where the eigenvalues have constant magnitude and the eigenvectors are orthonormal. Constructed in this way, \conmatev exactly preserves the conditioning of $\zmat\fftmat$. While this construction may seem restrictive, orthogonal matrices are relatively simple to generate and encompass a number of distinct cases. For small networks, selecting the eigenvalues uniformly at random from the unit circle (and including their complex conjugates to ensure real connectivity weights) and choosing an orthonormal set of complex conjugate eigenvectors creates precisely these optimal properties. For larger matrices, the connectivity matrix can instead be constructed directly by choosing \conmat at random and orthogonalizing the columns. Previous results on random matrices~\citep{DIA:1994} guarantee that as the size of \conmat increases, the eigenvalue probability density approaches the uniform distribution as desired. 
Some recent work in STM capacity demonstrates an alternate method by which orthogonal matrices can be constructed while constraining the total connectivity of the network~\citep{strauss2012design}. This method iteratively applies rotation matrices to obtain orthogonal matrices with varying degrees of connectivity.
We note here that one special case of connectivity matrices not well-suited to the STM task, even when made orthogonal, are symmetric networks, where the strictly real-valued eigenvalues generates poor RIP conditioning for \fftmat.

While simple to generate in principle, the matrix constructions discussed above are generally densely connected and may be impractical for many systems. 
However, many other special network topologies that may be more biophysically realistic (i.e., block diagonal connectivity matrices and small-world\footnote{Small-world structures are typically taken to be networks where small groups of neurons are densely connected amongst themselves, yet sparse connections to other groups reduces the maximum distance between any two nodes.} networks~\citep{MON:2008}) can be constructed so that \conmat still has orthonormal columns.  For example, consider the case of a block diagonal connection matrix (illustrated in Figure~\ref{fig:OtopFig}), where many unconnected networks of at least two nodes each are driven by the same input stimulus and evolve separately.  Such a structure lends itself to a modular framework, where more of these subnetworks can be recruited to recover input stimuli further in the past.  In this case, each block can be created independently as above and pieced together. 
The columns of the block diagonal matrix will still have unit norm and will be both orthogonal to vectors within its own block (since each of the diagonal sub-matrices are orthonormal) and orthogonal to all columns in other blocks (since there is no overlap in the non-zero indices). 

\begin{figure}
	\begin{center}
		\includegraphics[width=4.5in]{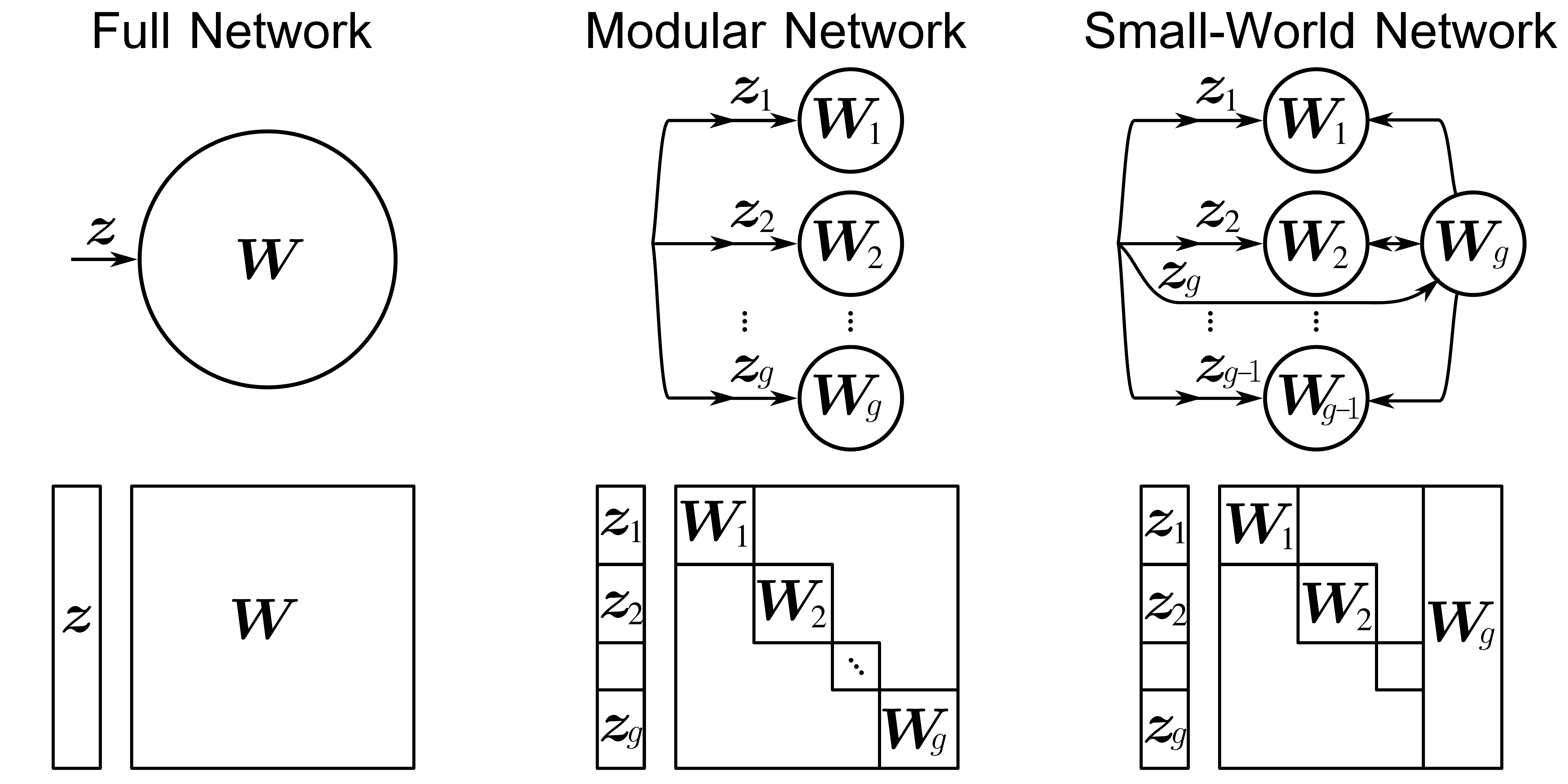}
	\end{center}
	\caption{Possible network topologies which have orthogonal connectivity matrices. In the general case, all nodes are connected via non-symmetric connections. Modular topologies can still be orthogonal if each block is itself orthogonal. Small world topologies may also have orthogonal connectivity, especially when a few nodes are completely connected to a series of otherwise disjoint nodes.}
	\label{fig:OtopFig}
\end{figure}

Similarly, a small-world topology can be achieved by taking a few of the nodes in every group of the block diagonal case and allowing connections to all other neurons (either unidirectional or bidirectional connections). To construct such a matrix, a block diagonal orthogonal matrix can be taken, a number of columns can be removed and replaced with full columns, and the resulting columns can be made orthonormal with respect to the remaining block-diagonal columns. In these cases, the same eigenvalue distribution and eigenvector properties hold as the fully connected case, resulting in the same RIP guarantees (and therefore the same recovery guarantees) demonstrated earlier.   We note that this is only one approach to constructing a network with favorable STM capacity and not all networks with small-world properties will perform well.

Additionally, we note that as opposed to networks analyzed in prior work (in particular the work in~\citep{latham2013} demonstrating that random networks with high connectivity have short STM), the average connectivity does not play a dominant role in our analysis.  Specifically, it has been observed in spiking networks that higher network connectivity can reduce the STM capacity so that is scales only with $\log(\nodenum)$~\citep{latham2013}). However, in our ESN analysis, networks can have low connectivity (e.g. 2x2 block-diagonal matrices - the extreme case of the block diagonal structure described above) or high connectivity (e.g. fully connected networks) and have the same performance. 

\subsection{Suboptimal Network Constructions}

Finally, we can also analyze some variations to the network structure assumed in this paper to see how much performance decreases.  First, instead of the deterministic construction for \ffvec discussed in the earlier sections, there has also been interest in choosing \ffvec as i.i.d.\ random Gaussian values~\citep{SOM:2008,GAN:2010}.  In this case, it is also possible to show that \ripmat satisfies the RIP (with respect to the basis \wavmat and with the same RIP conditioning \ripcd as before) by paying an extra $\log(\memlen)$ penalty in the number of measurements. 
Specifically, we have also established the following theorem:

\begin{thm}
	\label{thm:STMwithZ}
	Suppose $\memlen \ge \nodenum$, $\memlen \ge \spnum$ and $\memlen \ge O(1)$. 
	Let $\conmatev$ be any unitary matrix of eigenvectors (containing complex conjugate pairs) and the entries of \ffvec be i.i.d.\ zero-mean Gaussian random variables with variance $\frac{1}{\nodenum}$.
	For $\nodenum$ an even integer, denote the eigenvalues of \conmat by $\{e^{j w_m}\}_{m = 1}^{\nodenum}$.  Let the first $\nodenum/2$ eigenvalues ($\{e^{j w_m}\}_{m = 1}^{\nodenum/2}$) be chosen uniformly at random on the complex unit circle (i.e., we chose $\{w_m\}_{m=1}^{\nodenum/2}$ uniformly at random from $[0, 2\pi)$) and the other $\nodenum/2$ eigenvalues as the complex conjugates of these values.   
Then, for a given RIP conditioning $\ripcd$ and failure probability $\memlen^{-\log^4 \memlen} \le \eta \le \frac{1}{e}$, if 
\begin{gather}
	\nodenum \geq C\frac{\spnum}{\ripcd^2}\mu^2\left(\wavmat\right)\log^{5}\left(\memlen\right) \log(\eta^{-1}), \label{eqn:NMrelateZ}
\end{gather}
$\ripmat$ satisfies RIP-$(\spnum,\ripcd)$ with probability exceeding $1-\eta$ for a universal constant $C$.
\end{thm}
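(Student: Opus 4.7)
The plan is to combine Theorem~\ref{thm:STMbasic} (which handles the random eigenvalues) with a Gaussian concentration argument that absorbs the additional randomness introduced by the i.i.d.\ entries of $\ffvec$. Starting from the decomposition $\ripmat=\conmatev\zmat\fftmat$ of Section~\ref{sec:NetworkDynamics} and using unitarity of $\conmatev$, it suffices to control $\|\zmat\fftmat\wavmat\bm{a}\|_2^2$ for all $\spnum$-sparse $\bm{a}$. Setting $\bm{v}=\fftmat\wavmat\bm{a}$ and rewriting via the identity $\mbox{diag}(\conmatev^{-1}\ffvec)\bm{v}=\mbox{diag}(\bm{v})\conmatev^{-1}\ffvec$, the quantity of interest becomes a quadratic form in the Gaussian vector $\ffvec$:
\[
\|\ripmat\wavmat\bm{a}\|_2^2 \;=\; \ffvec^{T}\conmatev^{-*}\mbox{diag}(|v_1|^2,\dots,|v_\nodenum|^2)\conmatev^{-1}\ffvec \;=:\; \ffvec^{T}M(\bm{a})\ffvec.
\]

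Because $\conmatev\conmatev^{*}=\bm{I}$, the cyclic trace identity collapses the conditional mean to $\expect{\ffvec^{T}M(\bm{a})\ffvec}=\|\fftmat\wavmat\bm{a}\|_2^2/\nodenum$. Invoking Theorem~\ref{thm:STMbasic} at conditioning $\ripcd/2$ and failure probability $\eta/2$ places us on a favorable event $E_1$ on which this mean lies in $(1\pm\ripcd/2)\|\bm{a}\|_2^2$ simultaneously over every $\spnum$-sparse $\bm{a}$, at the cheaper cost $\nodenum\geq C\spnum\mu^2(\wavmat)\log^4(\memlen)\log(\eta^{-1})/\ripcd^2$. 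Conditional on $E_1$, what remains is to show that the quadratic form concentrates within an additional $\ripcd/2$ of its mean, uniformly over the $\spnum$-sparse unit sphere.

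For each fixed $\bm{a}$, the Hanson--Wright inequality bounds $|\ffvec^{T}M(\bm{a})\ffvec-\expect{\ffvec^{T}M(\bm{a})\ffvec}|$ in terms of the Frobenius norm $\|M(\bm{a})\|_F$ and the operator norm $\|M(\bm{a})\|$. The Frobenius norm equals $\|\bm{v}\|_2^2/\nodenum$ and is controlled directly by $E_1$; the operator norm is bounded by $\max_k|v_k|^2/\nodenum$ and requires a separate entrywise estimate. Extending from a pointwise to a uniform deviation bound over all $\spnum$-sparse unit vectors via a standard $\epsilon$-net/volume argument contributes an extra $O(\spnum\log(\memlen/\spnum))$ factor in the exponent; combined with the pointwise rate, this yields the stated $\log^5(\memlen)$ dependence and the probability bound of the theorem.

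The main obstacle is the uniform control of the operator norm $\|M(\bm{a})\|$: since the Hanson--Wright tail is driven by this quantity, a loose bound would inflate the logarithmic cost beyond one extra factor. Bounding $\max_k|(\fftmat\wavmat\bm{a})_k|$ uniformly over $\spnum$-sparse $\bm{a}$ is not a direct consequence of the RIP and typically requires a suprema-of-chaos or generic-chaining estimate in the spirit of~\citep{krahmer2012suprema}, which naturally reintroduces the coherence $\mu(\wavmat)$ between $\wavmat$ and the Fourier basis. The somewhat unusual hypothesis $\memlen^{-\log^4\memlen}\leq\eta$ reflects the range over which these chaining-based tails remain sharp and corresponds to the truncation level of the underlying Dudley integral.
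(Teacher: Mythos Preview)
Your decomposition and the reduction to a Gaussian quadratic form are fine, and conditioning first on the random frequencies is a natural idea. The genuine gap is in the step ``pointwise Hanson--Wright plus $\epsilon$-net over $\spnum$-sparse unit vectors.'' Even granting the sharpest possible bound $\max_k|v_k|^2\le \spnum\,\mu^2(\wavmat)$ for $v=\fftmat\wavmat\bm a$, the subgaussian rate in Hanson--Wright is governed by $\|M(\bm a)\|_F^2=\sum_k|v_k|^4\le \spnum\,\mu^2(\wavmat)\,\|v\|_2^2\approx \spnum\,\mu^2(\wavmat)\,\nodenum$, so the pointwise exponent is of order $\nodenum\ripcd^2/(\spnum\,\mu^2(\wavmat))$. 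The net over $\spnum$-sparse unit vectors has cardinality $\exp\bigl(C\spnum\log(\memlen/\spnum)\bigr)$, so the union bound forces $\nodenum\gtrsim \spnum^{2}\mu^2(\wavmat)\log(\memlen)/\ripcd^2$, a quadratic dependence on $\spnum$, not the linear one in~\eqref{eqn:NMrelateZ}. Your claim that the net ``combined with the pointwise rate yields the stated $\log^5(\memlen)$ dependence'' is therefore incorrect: the extra factor you incur is $\spnum$, not $\log\memlen$. (Also, $\|M(\bm a)\|_F$ is $\bigl(\sum_k|v_k|^4\bigr)^{1/2}$, not $\|v\|_2^2/\nodenum$; unitary conjugation preserves the Frobenius norm of the diagonal matrix, it does not collapse it to the trace.)

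The paper avoids this $\spnum^2$ trap by never doing a pointwise-then-net argument. It works directly with the random operator $\sum_l|\ffvecu_l|^2 X_lX_l^H-\bm I$ in the $\|\cdot\|_\spnum$ norm and applies the Rudelson--Vershynin chaining lemma (Lemma~\ref{lem:RV_1}), which already contains the correct uniform control over all $\spnum$-sparse directions. The single extra $\log\memlen$ relative to Theorem~\ref{thm:STMbasic} enters because the parameter $\kappa$ in that lemma is now random: $\kappa\le\mu(\wavmat)\max_l|\ffvecu_l|$, and Lemma~\ref{lem:max_ffvecu} shows $\expect{\max_l|\ffvecu_l|^2}\lesssim\log(\nodenum)/\nodenum$ rather than $1/\nodenum$. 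The tail bound is then obtained not via Hanson--Wright but via a Talagrand-type deviation inequality (Lemma~\ref{lem:tail_prob}) applied after symmetrization and truncation on the event that $\max_l|\ffvecu_l|^2$ is bounded; the restriction $\eta\ge \memlen^{-\log^4\memlen}$ arises from balancing the truncation level against the deviation term, not from a Dudley-integral cutoff. Your last paragraph correctly senses that chaining is needed, but it is needed to replace the entire Hanson--Wright-plus-net scheme, not merely to sharpen the bound on $\max_k|v_k|$.
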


The proof of this theorem can be found in Appendix~\ref{app:RIPz}. The additional log factor in the bound in~\eqref{eqn:NMrelateZ} reflects that a random feed-forward vector may not optimally spread the input energy over the different eigen-directions of the system. Thus, some nodes may see less energy than others, making them slightly less informative. Note that while this construction does perform worse that the optimal constructions from Theorem~\ref{thm:STMbasic}, the STM capacity is still very favorable (i.e.,  a linear scaling in the sparsity level and logarithmic scaling in the signal length).

Second, instead of orthogonal connectivity matrices, there has also been interest in network constructions involving non-orthogonal connectivity matrices (perhaps for noise reduction purposes~\citep{SOM:2008}).  
When the eigenvalues of \conmat still lie on the  complex unit circle, we can analyze how non-orthogonal matrices affect the RIP results.
In this case, the decomposition in Equation~\eqref{eqn:sumMat2} still holds and Theorem~\ref{thm:STMbasic} still applies to guarantee that \fftmat satisfies the RIP. However, the non-orthogonality changes the conditioning of \conmatev and subsequently the total conditioning of \ripmat .
Specifically the conditioning of \conmatev (the ratio of the maximum and minimum singular values $\sigma_{\max}^2/\sigma_{\min}^2 = \gamma$) will effect the total conditioning of \ripmat. We can use the RIP of \fftmat and the extreme singular values of \conmatev to bound how close $\conmatev\fftmat$ is to an isometry for sparse vectors, both above by
\begin{gather}
	\norm{\conmatev\fftmat\sigvec}_2^2 \leq \sigma_{\max}^2\norm{\fftmat\sigvec}_2^2 \leq \sigma_{\max}^2\ripcc(1+\ripcd)\norm{\sigvec}_2^2,\nonumber
\end{gather}
and below by
\begin{gather}
	\norm{\conmatev\fftmat\sigvec}_2^2 \geq \sigma_{\min}^2\norm{\fftmat\sigvec}_2^2 \geq \sigma_{\min}^2\ripcc(1-\ripcd)\norm{\sigvec}_2^2.\nonumber
\end{gather}
By consolidating these bounds, we find a new RIP statement for the composite matrix
\begin{gather}
	\ripcc'\left(1-\ripcd'\right)\norm{\sigvec}_2^2 \leq \norm{\conmatev\fftmat\sigvec}_2^2 \leq \ripcc'\left(1+\ripcd'\right)\norm{\sigvec}_2^2 \nonumber
\end{gather}
where $\sigma_{\min}^2\ripcc(1-\ripcd)  =  \ripcc'(1-\ripcd')$ and $\sigma_{\max}^2\ripcc(1+\ripcd)  =  \ripcc'(1+\ripcd')$.
These relationships can be used to solve for the new RIP constants:
\begin{eqnarray}
	\ripcd' & = & \frac{\frac{\gamma - 1}{\gamma + 1} + \ripcd}{1 + \ripcd\frac{\gamma - 1}{\gamma + 1}} \nonumber  \\
	\ripcc' & = & \frac{1}{2}\ripcc\left(\sigma_{\max}^2 + \sigma_{\min}^2 + \ripcd(\sigma_{\max}^2 - \sigma_{\min}^2)\right) \nonumber
\end{eqnarray}
These expressions demonstrate that as the conditioning of \conmatev improves (i.e. $\gamma \rightarrow 1$), the RIP conditioning does not change from the optimal case of an orthogonal network ($\ripcd' = \ripcd$). However, as the conditioning of \conmatev gets worse and $\gamma$ grows, the constants associated with the RIP statement also get worse (implying more measurements are likely required to guarantee the same recovery performance).

The above analysis primarily concerns itself with constructions where the eigenvalues of \conmat are still unit norm, however \conmatev is not orthogonal. Generally, when the eigenvalues of \conmat differ from unity and are not all of equal magnitude, the current approach becomes intractable. In one case, however, there are theoretical guarantees: when \conmat is rank deficient. If \conmat only has $\widetilde{\nodenum}$ unit-norm eigenvalues, and the remaining $\nodenum - \widetilde{\nodenum}$ eigenvalues are zero, then the resulting matrix \ripmat is composed the same way, except that the bottom $\nodenum - \widetilde{\nodenum}$ rows are all zero. This means that the effective measurements only depend on an $\widetilde{\nodenum}\times\memlen$ subsampled DTFT
\begin{eqnarray}
	\nodevect{\memlen} & = & \conmatev\zmat\fftmat\sigvec + \bm{\epsilon} \nonumber \\
	                   & = & \conmatev\zmat\left[\begin{matrix}\widetilde{\fftmat} \\ \bm{0}_{\nodenum-\widetilde{\nodenum},\memlen}\end{matrix} \right]\sigvec + \bm{\epsilon} \nonumber \\
				   & = & \conmatev\zmat_{1:\widetilde{M}}\widetilde{\fftmat}\sigvec + \bm{\epsilon} \nonumber
\end{eqnarray}
where $\widetilde{\fftmat}$ is matrix consisting of the non-zero rows of \fftmat. In this case we can choose any $\widetilde{\nodenum}$ of the nodes and the previous theorems will all hold, replacing the true number of nodes \nodenum with the effective number of nodes $\widetilde{\nodenum}$.

\section{Discussion}
\label{sec:conc}

We have seen that the tools of the CS literature can provide a way to quantify the STM capacity in linear networks using rigorous non-asymptotic recovery error bounds.  Of particular note is that this approach leverages the non-Gaussianity of the input statistics to show STM capacities that are superlinear in the size of the network and depend linearly on the sparsity level of the input.  This work provides a concrete theoretical understanding for the approach conjectured in~\citep{GAN:2010} along with a generalization to arbitrary sparsity bases and infinitely long input sequences.  This analysis also predicts that there exists an optimal recovery length that balances omission errors and recall mistakes.  

In contrast to previous work on ESNs that leverage nonlinear network computations for computational power~\citep{JAE:2004}, the present work uses a linear network and nonlinear computations for signal recovery. Despite the nonlinearity of the recovery process, the fundamental results of the CS literature also guarantee that the recovery process is stable and robust.  For example, with access to only a subset of nodes (due to failures or communication constraints), signal recovery generally degrades gracefully by still achieving the best possible approximation of the signal using fewer coefficients.  Beyond signal recovery, we also note that the RIP can guarantee performance on many tasks (e.g. detection, classification, etc.) performed directly on the network states~\citep{DAV:2010}. 
Finally, we note that while this work only addresses the case where a single input is fed to the network, there may be networks of interest that have a number of input streams all feeding into the same network (with different feed-forward vectors). We believe that the same tools utilized here can be used in the multi-input case, since the overall network state is still a linear function of the inputs.


\subsection*{Acknowledgments}
The authors are grateful to J. Romberg for valuable discussions related to this work. This work was partially supported by NSF grant CCF-0905346 and DSO National Laboratories, Singapore. 

\section{Appendix}

\subsection{Proof of RIP}
\label{app:BasicRIP}

In this appendix, we show that the matrix $\ripmat = \conmatev\zmat\fftmat$ satisfies the RIP under the conditions stated in Equation~\eqref{eqn:NMrelate} of the main text in order to prove Theorem~\ref{thm:STMbasic}.
We note that~\citep{RAH:2010} shows that for the canonical basis ($\wavmat = \bm{I}$), the bounds for $\nodenum$ can be tightened to $\nodenum \ge \max\left\{C \frac{\spnum}{\ripcd^2} \log^4 \memlen,\; C' \frac{\spnum}{\ripcd^2} \log \eta^{-1} \right\}$ using a more complex proof technique than we will employ here.
For $\eta = \frac{1}{\memlen}$, the result in~\citep{RAH:2010} represents an improvement of several $\log (\memlen)$ factors when restricted to only the canonical basis for \wavmat. We also note that the scaling constant \ripcc found in the general RIP definition of Equation~\eqref{eqn:RIPstate} of the main text is unity due to the $\sqrt{\nodenum}$ scaling of \ffvec. 

While the proof of Theorem~\ref{thm:STMbasic} is fairly technical, the procedure follows very closely the proof of Theorem 8.1 from~\citep{RAH:2010} on subsampled discrete time Fourier transform (DTFT) matrices.  While the basic approach is the same, the novelty in our presentation is the incorporation of the sparsity basis $\wavmat$ and considerations for a real-valued connectivity matrix $\conmat$. 

Before beginning the proof of this theorem, we note that because
$\conmatev$ is assumed unitary, $\|\ripmat \wavmat \sigvec\|_2 = \|\zmat \fftmat \wavmat \sigvec\|_2$ for any signal $\sigvec$. 
Thus, it suffices to establish the conditioning properties of the matrix $\widehat{\ripmat} := \zmat \fftmat \wavmat$.  For the upcoming proof, it will be useful to write this matrix as a sum of 
rank-1 operators.  The specific rank-1 operator that will be useful for our purposes is $X_l X_l^H$ with $X_l^H := \fftmat_l^H \wavmat$, the conjugate of the $l$-th row of $\fftmat \wavmat$, where $\fftmat_l^H := \left[1,\; e^{jw_l},\; \cdots,\; e^{j w_l (N-1)} \right] \in \comps^{\memlen}$ is the conjugated $l$-th row of $\fftmat$. 
Because of the way the ``frequencies'' $\{w_m\}$ are chosen, for any $l > \frac{M}{2}$, $X_l = X_{l - \frac{M}{2}}^*$. 
The $l$-th row of $\widehat{\ripmat}$ is $\ffvecu_l X_l^H$ where $\ffvecu_l$ is the $l$-th diagonal entry of the diagonal matrix $\zmat$, meaning that we can use the sum of rank-1 operators to write the decomposition $\widehat{\ripmat}^H \widehat{\ripmat} = \sum_{l = 1}^\nodenum |\ffvecu_l|^2 X_l X_l^H$.  If we define the random variable $\bm{B} := \widehat{\ripmat}^H \widehat{\ripmat} - \bm{I}$ and the norm
 $\displaystyle \|\bm{B}\|_\spnum := \sup_{\tiny y\mbox{ is \spnum-sparse}}\frac{\bm{y}^H\bm{B}\bm{y}}{\bm{y}^H\bm{y} }$, we can equivalently say that $\widehat{\ripmat}$ has RIP conditioning $\ripcd$ if
\begin{eqnarray*}
	\left\| \bm{B} \right\|_\spnum := 
	\left\|\widehat{\ripmat}^H \widehat{\ripmat} - \bm{I} \right\|_\spnum =
	\left\|\sum_{l = 1}^\nodenum |\ffvecu_l|^2 X_l X_l^H - \bm{I} \right\|_\spnum \le \ripcd. 
\end{eqnarray*}

To aid in the upcoming proof, we make a few preliminary observations and rewrite the quantities of interest in some useful ways.  First,  because of the correspondences between the summands in $\widehat{\ripmat}^H \widehat{\ripmat}$ (i.e. $X_l = X_{l-M/2}^{*}$), we can rewrite $\widehat{\ripmat}^H \widehat{\ripmat}$ as
\begin{eqnarray*}
	\widehat{\ripmat}^H \widehat{\ripmat} &=& \sum_{l = 1}^{{\nodenum}/{2}} |\ffvecu_l|^2 X_l X_l^H + \sum_{l = 1}^{{\nodenum}/{2}} |\ffvecu_l|^2 \left(X_l X_l^H\right)^{*},
\end{eqnarray*}
making clear the fact that there are only $\frac{\nodenum}{2}$ independent $w_m$'s. 
Under the assumption of Theorem~\ref{thm:STMbasic}, $\ffvecu_l = \frac{1}{\sqrt{\nodenum}}$ for $l = 1, \cdots, \nodenum$. 
Therefore,
\begin{eqnarray*}
	\expect{\sum_{l = 1}^{{\nodenum}/{2}} |\ffvecu_l|^2 X_l X_l^H} 
	= \sum_{l = 1}^{\nodenum/2} |\ffvecu_l|^2 \expect{X_l X_l^H}
	= \sum_{l = 1}^{\nodenum/2} \frac{1}{\nodenum} \wavmat^H \expect{\fftmat_l \fftmat_l^H} \wavmat 
	= \frac{1}{2} \bm{I},
\end{eqnarray*}
where it is straightforward to check that $\expect{\fftmat_l \fftmat_l^H} = \bm{I}$. 
By the same reasoning, we also have $\expect{\sum_{l = 1}^{{\nodenum}/{2}} |\ffvecu_l|^2 \left(X_l X_l^H\right)^{*}} = \frac{1}{2} \bm{I}$. 
This implies that we can rewrite $\bm{B}$ as
\begin{eqnarray*}
	\bm{B} & = & \sum_{l = 1}^\nodenum \left(|\ffvecu_l|^2 {X_l X_l^H} \right) - \bm{I} \\
	& = & \left(\sum_{l = 1}^{\nodenum/2} |\ffvecu_l|^2 {X_l X_l^H} - \frac{1}{2} \bm{I}\right) + 
	\left(\sum_{l = 1}^{\nodenum/2} |\ffvecu_l|^2 \left(X_l X_l^H\right)^{*} - \frac{1}{2} \bm{I}\right) \\
	& =: & \bm{B}_1 + \bm{B}_2. 
\end{eqnarray*}

The main proof of the theorem has two main steps.  First, we will establish a bound on the moments of the quantity of interest $\|\bm{B}\|_\spnum$.  Next we will use these moments to derive a tail bound on $\|\bm{B}\|_\spnum$, which will lead directly to the RIP statement we seek.  The following two lemmas from the literature will be critical for these two steps.
\begin{lemma}[Lemma 8.2 of~\citep{RAH:2010}]
	\label{lem:RV_1}
	Suppose $\nodenum \ge \spnum$ and suppose we have a sequence of (fixed) vectors $Y_l \in \comps^\memlen$ for $l = 1, \cdots, \nodenum$ such that $\kappa := \max_{l=1, \cdots, M} \|Y_l\|_\infty < \infty$. 
	Let $\{\xi_l\}$ be a Rademacher sequence, i.e., a sequence of i.i.d.\ $\pm 1$ random variables. 
	Then for $p = 1$ and for $p \in \reals$ and $p \ge 2$,
	\begin{eqnarray*}
		\left(\expect{\left\|\sum_{l=1}^\nodenum \xi_l Y_l Y_l^H \right\|_\spnum^p}\right)^{1/p} & \le & C' C^{1/p} \kappa \sqrt{p} \sqrt{\spnum} \log(100\spnum) \sqrt{\log(4\memlen) \log(10\nodenum)} \\
	& &\times	\sqrt{\left\| \sum_{l=1}^\nodenum Y_l Y_l^H \right\|_\spnum},
	\end{eqnarray*}
	where $C, C'$ are universal constants.
\end{lemma}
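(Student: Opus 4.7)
The quantity to bound, $\|\sum_{l=1}^\nodenum \xi_l Y_l Y_l^H\|_\spnum$, is the supremum of the Rademacher process $x \mapsto \sum_l \xi_l |Y_l^H x|^2$ over the set $D$ of \spnum-sparse unit vectors in $\mathbb{C}^\memlen$. My plan is to control this supremum by Dudley-type generic chaining, and then promote the resulting expectation estimate to an $L^p$ moment bound with the stated $\sqrt{p}$ scaling via a standard concentration inequality for Rademacher processes indexed by sparse sets.

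First I would use $\|B\|_\spnum = \sup_{x \in D} |x^H B x|$ to identify the object of interest as a Rademacher process indexed by $D$, with natural sub-Gaussian pseudometric $d(x, x')^2 = \sum_l \bigl(|Y_l^H x|^2 - |Y_l^H x'|^2\bigr)^2$. Using the elementary identity $|a|^2 - |b|^2 = (|a| - |b|)(|a| + |b|)$ together with the $\ell^\infty$ bound $\|Y_l\|_\infty \le \kappa$ and the fact that a \spnum-sparse unit vector has $\ell^1$-norm at most $\sqrt{\spnum}$, one dominates $d(x, x')$ by a multiple of $\max_l |Y_l^H (x - x')|$ --- i.e., a weighted $\ell^\infty$ metric on differences. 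This reduction is what makes Maurey's method applicable.

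Next I would apply Dudley's entropy integral $\mathcal{E}\sup_{x \in D} |\sum_l \xi_l |Y_l^H x|^2| \lesssim \int_0^\infty \sqrt{\log N(D, d, u)}\, du$ and bound covering numbers of $D$ under the surrogate metric using Maurey's empirical method for \spnum-sparse unit vectors, which yields the characteristic $\spnum \log(\memlen)$ entropy scaling. Splitting the Dudley integral at an appropriate scale (using a volume bound on the small-scale side and Maurey on the large-scale side) and bounding the $d$-diameter of $D$ by $\kappa \sqrt{\|\sum_l Y_l Y_l^H\|_\spnum}$ produces the multiplicative $\sqrt{\|\sum_l Y_l Y_l^H\|_\spnum}$ factor together with the $\sqrt{\spnum}\,\log(100\spnum)\sqrt{\log(4\memlen)\log(10\nodenum)}$ entropy factor.

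Finally, to pass from the expectation bound to the $p$-th moment bound, I would invoke a Pisier-style moment comparison for suprema of Rademacher processes (or equivalently integrate the sub-Gaussian tail obtained from Talagrand concentration for the empirical process), both of which introduce exactly the $\sqrt{p}$ factor. The main obstacle is bookkeeping: Maurey's estimates have to be applied at the right scale in the Dudley split so that no spurious $\log$ powers creep in, and the pseudometric reduction must preserve the correct dependence on $\kappa$ rather than the worse $\kappa \sqrt{\spnum}$ one might get from a careless Cauchy--Schwarz step. Since this is stated as Lemma 8.2 of \citep{RAH:2010}, the cleanest route is simply to cite that proof; the sketch above records the conceptual skeleton one would fill in to reproduce it.
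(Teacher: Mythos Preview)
The paper does not prove this lemma at all; it is quoted verbatim as Lemma~8.2 of \citep{RAH:2010} and used as a black box in the proofs of Theorems~\ref{thm:STMbasic} and~\ref{thm:STMwithZ}. Your closing remark --- that the cleanest route is simply to cite \citep{RAH:2010} --- is therefore exactly what the paper does, and nothing further is required here.

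That said, your sketch is a faithful outline of the argument one finds in Rauhut's notes: identify $\|\cdot\|_\spnum$ as the supremum of a Rademacher chaos over \spnum-sparse unit vectors, dominate the induced sub-Gaussian pseudometric by a weighted $\ell^\infty$ distance, bound covering numbers by Maurey's empirical method, integrate via Dudley, and then lift the expectation bound to $L^p$ moments. One small clarification worth recording: the factor $\sqrt{\|\sum_l Y_l Y_l^H\|_\spnum}$ does not arise only from the $d$-diameter of $D$; it already appears in the pseudometric domination step, since bounding $\sum_l (|Y_l^H x|^2 - |Y_l^H x'|^2)^2$ via $|a|^2-|b|^2 = (|a|-|b|)(|a|+|b|)$ requires controlling $\sum_l (|Y_l^H x| + |Y_l^H x'|)^2 \le 4\|\sum_l Y_l Y_l^H\|_\spnum$. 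This is precisely the mechanism that produces the implicit recursion $E_p \lesssim a_p\sqrt{E_p + 1/2}$ exploited later in Appendix~\ref{app:BasicRIP}.
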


\begin{lemma}[Adapted from Proposition 6.5 of~\citep{RAH:2010}]
	\label{lem:tailbound}
	Suppose $Z$ is a random variable satisfying
	\begin{eqnarray*}
		\left( \expect{|Z|^p} \right)^{1/p} \le \alpha \beta^{1/p} p^{1/\gamma},
	\end{eqnarray*}
	for all $p \in [p_0,\; p_1]$, and for constants $\alpha, \beta, \gamma, p_0, p_1$. 
	Then, for all $u \in [p_0^{1/\gamma},\; p_1^{1/\gamma}]$,
	\begin{eqnarray*}
		\proba{|Z| \ge e^{1/\gamma} \alpha u} \le \beta e^{-u^{\gamma}/\gamma}. 
	\end{eqnarray*}
\end{lemma}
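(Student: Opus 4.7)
The plan is to apply Markov's inequality to $|Z|^p$ and then optimize the resulting bound over the free parameter $p$. The moment hypothesis $(\expect{|Z|^p})^{1/p} \le \alpha \beta^{1/p} p^{1/\gamma}$ is exactly the input that makes a polynomial Markov argument sharp, and the standard route to convert such a moment bound into a Weibull-type tail of the form $\beta e^{-u^{\gamma}/\gamma}$ is to tune $p$ to the target deviation level.

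First, for any admissible $p \in [p_0, p_1]$, Markov's inequality applied to $|Z|^p$ together with the hypothesis gives
\begin{equation*}
\proba{|Z| \ge t} \le \frac{\expect{|Z|^p}}{t^p} \le \beta \left(\frac{\alpha\, p^{1/\gamma}}{t}\right)^p.
\end{equation*}
Substituting the target level $t = e^{1/\gamma}\alpha u$ simplifies the right-hand side to $\beta\bigl(p^{1/\gamma}/(e^{1/\gamma}\,u)\bigr)^p$, leaving only the choice of $p$.

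Second, I would set $p = u^{\gamma}$. With this choice, $p^{1/\gamma} = u$, so the ratio inside the $p$-th power collapses to $e^{-1/\gamma}$, and the resulting exponent $-p/\gamma = -u^{\gamma}/\gamma$ is exactly the Weibull exponent appearing in the conclusion. Collecting terms yields $\proba{|Z| \ge e^{1/\gamma}\alpha u} \le \beta e^{-u^{\gamma}/\gamma}$. A quick differentiation of $p \log\bigl(p^{1/\gamma}/(e^{1/\gamma}u)\bigr)$ in $p$ confirms that $p = u^{\gamma}$ is the unique minimizer, so this choice is not ad hoc but is the sharpest Markov bound available from the hypothesis.

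The only thing to verify is that the chosen $p = u^{\gamma}$ lies in the admissible range $[p_0, p_1]$. But this requirement is equivalent to $u \in [p_0^{1/\gamma}, p_1^{1/\gamma}]$, which is precisely the range over which the lemma claims the tail bound. Consequently there is no real obstacle here; the argument is a textbook Markov-plus-optimization, and the substantive content of the lemma is the careful packaging of the hypothesis so that the moment-optimal exponent is always admissible over the stated range of $u$.
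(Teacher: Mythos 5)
Your proof is correct: the Markov bound $\proba{|Z|\ge t}\le \beta(\alpha p^{1/\gamma}/t)^p$ followed by the choice $p=u^{\gamma}$ (admissible exactly when $u\in[p_0^{1/\gamma},p_1^{1/\gamma}]$) gives the stated Weibull tail, and this is precisely the standard argument behind Proposition 6.5 of the cited reference. The paper itself does not reproduce a proof of this lemma --- it imports it from~\citep{RAH:2010} --- so there is nothing in the paper for your argument to diverge from; your derivation matches the canonical one.
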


Armed with this notation and these lemmas, we now prove Theorem~\ref{thm:STMbasic}:
\begin{proof}
	We seek to show that under the conditions on $\nodenum$ in Theorem~\ref{thm:STMbasic}, $\proba{\|\bm{B}\|_\spnum > \ripcd} \le \eta$. 
	Since $\bm{B} = \bm{B}_1 + \bm{B}_2$ and $\{\|\bm{B}_1\|_\spnum \le \ripcd/2\} \cap \{\|\bm{B}_2\|_\spnum \le \ripcd/2\} \subset \{\|\bm{B}\|_\spnum \le \ripcd\}$, then,
	\begin{eqnarray*}
		\proba{\|\bm{B}\|_\spnum > \ripcd} \le \proba{\|\bm{B}_1\|_\spnum > \ripcd/2} + \proba{\|\bm{B}_2\|_\spnum > \ripcd/2}. 
	\end{eqnarray*}
	Thus, it will suffice to bound $\proba{\|\bm{B}_1\|_\spnum > \ripcd/2} \le \eta/2$ since $\bm{B}_2 = \bm{B}_1^{*}$ implies that $\proba{\|\bm{B}_2\|_\spnum > \ripcd/2}\le \eta/2$. In this presentation we let $C, C'$ be some universal constant that may not be the same from line to line.

	To begin, we use Lemma~\ref{lem:RV_1} to bound $E_p := \left(\expect{\|\bm{B}_1\|_\spnum^p}\right)^{1/p}$ by setting $Y_l = \ffvecu_l^{*} X_l$ for $l = 1, \cdots, \frac{\nodenum}{2}$. 
	To meet the conditions of Lemma~\ref{lem:RV_1} we use a standard ``symmetrization'' manipulation (see Lemma 6.7 of~\citep{RAH:2010}).  Specifically, we can write:
\begin{eqnarray*}
	E_p & = & \left(\expect{\| \bm{B}_1 \|_\spnum^p}\right)^{1/p} \\
	& \le & 2 \left(\expect{\left\|  \sum_{l=1}^{\nodenum/2} \xi_l Y_l Y_l^H \right\|_\spnum^p}\right)^{1/p} \\
	& = & 2 \left(\expect{\left\| \sum_{l=1}^{\nodenum/2} \xi_l |\ffvecu_l|^2 X_l X_l^H \right\|_\spnum^p}\right)^{1/p},
\end{eqnarray*}
where now the expectation is over the old random sequence $\{w_l\}$, together with a newly added Rademacher sequence $\{\xi_l\}$. 
Applying the law of iterated expectation and Lemma~\ref{lem:RV_1}, we have for $p \ge 2$:
\begin{eqnarray}
	\label{eq:Ep}
	E_p^p & := & \expect{\| \bm{B}_1 \|_{\spnum}^p} \\
	&\le& 2^p \expect{\expect{\left\| \sum_{l=1}^{\nodenum/2} \xi_l |\ffvecu_l|^2 X_l X_l^H \right\|_\spnum^p \;\vert\; \{w_l\}}} \\
	&\le& \left(2 C' C^{1/p} \sqrt{p} \kappa \sqrt{\spnum} \log (100 \spnum) \sqrt{\log(4 \memlen) \log(5 \nodenum)} \right)^p \expect{\left\| \sum_{l=1}^{\nodenum/2} |\ffvecu_l|^2 X_l X_l^H \right\|_\spnum^{p/2}} \nonumber \\
	&\le& \left(C^{1/p} \sqrt{p} \kappa \sqrt{C' \spnum \log^4(\memlen)}\right)^{p} \expect{\left( \left\| \sum_{l=1}^{\nodenum/2} \left(|\ffvecu_l|^2 X_l X_l^H - \frac{1}{2} \bm{I} \right) \right\|_\spnum + \frac{1}{2} \|I\|_\spnum \right)^{p/2}} \nonumber \\
	&\le& \left(C^{1/p} \sqrt{p} \kappa \sqrt{C' \spnum \log^4(\memlen)}\right)^{p} \sqrt{\expect{\left( \left\| \sum_{l=1}^{\nodenum/2} \left(|\ffvecu_l|^2 X_l X_l^H - \frac{1}{2} \bm{I} \right) \right\|_\spnum + \frac{1}{2} \right)^{p}}} \nonumber \\
	&\le& \left(C^{1/p} \sqrt{p} \kappa \sqrt{C' \spnum \log^4(\memlen)}\right)^{p} \sqrt{\left(E_p + \frac{1}{2} \right)^{p}}. \nonumber
\end{eqnarray}
In the first line above, the inner expectation is over the Rademacher sequence $\{\xi_l\}$ (where we apply Lemma~\ref{lem:RV_1}) while the outer expectation is over the $\{w_l\}$. 
The third line uses the triangle inequality for the $\| \cdot \|_\spnum$ norm, the fourth line uses Jansen's inequality, and the fifth line uses triangle inequality for moments norm (i.e., $(\expect{|X + Y|^p})^{1/p} \le (\expect{|X|^p})^{1/p} + (\expect{|Y|^p})^{1/p}$). 
To get to $\log^4 \memlen$ in the third line, we used our assumption that $\memlen \ge \nodenum$, $\memlen \ge \spnum$ and $\memlen \ge O(1)$ in Theorem~\ref{thm:STMbasic}. 
Now using the definition of $\kappa$ from Lemma~\ref{lem:RV_1}, we can bound this quantity as: 
\begin{eqnarray*}
	\kappa := \max_l \|Y_l\|_\infty = \max_{l} |\ffvecu_l| \|X_l\|_\infty = \frac{1}{\sqrt{\nodenum}} \max_l \|X_l\|_\infty = \frac{1}{\sqrt{\nodenum}} \max_{l,n} |\langle \fftmat_l, \wavmat_n \rangle| \le \frac{\mu(\wavmat)}{\sqrt{M}}. 
\end{eqnarray*} 
Therefore, we have the following implicit bound on the moments of the random variable of interest
\begin{eqnarray*}
	E_p &\le& C^{1/p} \sqrt{p} \sqrt{\frac{C' \spnum \mu(\wavmat)^2 \log^4(\memlen)}{\nodenum}} \sqrt{E_p + \frac{1}{2}}.
\end{eqnarray*}
The above can be written as $E_p \le a_p \sqrt{E_p + \frac{1}{2}}$, where $a_p = C^{1/p} \sqrt{p} \sqrt{\frac{4 C' \spnum \mu(\wavmat)^2 \log^4(\memlen)}{\nodenum}}$. 
By squaring, rearranging the terms and completing the square, we have $E_p \le \frac{a_p^2}{2} + a_p \sqrt{\frac{1}{2} + \frac{a_p^2}{4}}$. 
By assuming $a_p \le \frac{1}{2}$, this bound can be simplified to $E_p \le a_p$. 
Now, this assumption is equivalent to having an upper bound on the range of values of $p$:
\begin{eqnarray*}
	a_p \le \frac{1}{2} &\Leftrightarrow& \sqrt{p} \le \frac{1}{2 C^{1/p}} \sqrt{\frac{\nodenum}{4 C' \spnum \mu(\wavmat)^2 \log^4(\memlen)}} \\
	&\Leftrightarrow& p \le {\frac{\nodenum}{16 C^{2/p} C' \spnum \mu(\wavmat)^2 \log^4(\memlen)}}.
\end{eqnarray*}

Hence, by using Lemma~\ref{lem:tailbound} with $\alpha = \sqrt{\frac{C' \spnum \mu(\wavmat)^2 \log^4(\memlen)}{\nodenum}}$, $\beta = C$, $\gamma = 2$, $p_0 = 2$, and $p_1 = \frac{\nodenum}{16 C^{2/p} C' \spnum \mu(\wavmat)^2 \log^4(\memlen)}$ we obtain the following tail bound for $u \in [\sqrt{2},\; \sqrt{p_1}]$:
\begin{eqnarray*}
	\proba{\| \bm{B}_1 \|_\spnum \ge e^{1/2} \sqrt{\frac{C' \spnum \mu(\wavmat)^2 \log^4(\memlen)}{\nodenum}} u} \le C e^{-u^2/2}. 
\end{eqnarray*}
If we pick $\ripcd < 1$ such that 
\begin{eqnarray}
	e^{1/2} \sqrt{\frac{C' \spnum \mu(\wavmat)^2 \log^4(\memlen)}{\nodenum}} u \le \frac{\ripcd}{2}
	\label{eq:requirementonM}
\end{eqnarray}
and $u$ such that 
\begin{eqnarray*}
	C e^{-u^2/2} \le \frac{\eta}{2} &\Leftrightarrow& u \ge \sqrt{2 \log(2C \eta^{-1})},
\end{eqnarray*}
then we have our required tail bound of $\proba{\|\bm{B}_1\|_\spnum > \ripcd} \le \eta/2$. 
First, observe that Equation~\eqref{eq:requirementonM} is equivalent to having
\begin{eqnarray*}
	\nodenum \ge \frac{C\spnum \mu(\wavmat)^2 \log^4(\memlen) \log(\eta^{-1})}{\ripcd^2}. 
\end{eqnarray*}
Also, because of the limited range of values $u$ can take (i.e., $u \in [\sqrt{2},\; \sqrt{p_1}]$), we require that 
\begin{eqnarray*}
	\sqrt{2 \log(2C \eta^{-1})} & \le & \sqrt{\frac{\nodenum}{16 C^{2/p} C' \spnum \mu(\wavmat)^2 \log^4(\memlen)}} = \sqrt{p_1} \\
	&\Leftrightarrow& M \ge C \spnum \mu(\wavmat)^2 \log^4(\memlen) \log(\eta^{-1}),
\end{eqnarray*}
which, together with the earlier condition on $M$, completes the proof.

\end{proof}



\subsection{RIP with Gaussian feed-forward vectors}
\label{app:RIPz}

In this appendix we extend the RIP analysis of Appendix~\ref{app:BasicRIP} to the case when $\ffvec$ is chosen to be a Gaussian i.i.d.\ vector, as presented in Theorem~\ref{thm:STMwithZ}.
It is unfortunate that with the additional randomness in the feed-forward vector, the same proof procedure as in Theorem~\ref{thm:STMbasic} cannot be used. 
In the proof of Theorem~\ref{thm:STMbasic}, we showed that the random variable $\|Z_1\|_\spnum$ has $p$-th moments that scale like $\alpha \beta^{1/p} p^{1/2}$ (through Lemma~\ref{lem:RV_1}) for a range of $p$ which suggests that it has a sub-gaussian tail (i.e., $\proba{\|Z_1\|_\spnum > u} \le Ce^{-u^2/2}$) for a range of deviations $u$. We then used this tail bound to bound the probability that $\|Z_1\|_\spnum$ exceeds a fixed conditioning $\ripcd$. With Gaussian uncertainties in the feed-forward vector $z$, Lemma~\ref{lem:RV_1} will not yield the required sub-gaussian tail but instead gives us moments estimates that result in sub-optimal scaling of $\nodenum$ with respect to $\memlen$. Therefore, we will instead follow the proof procedure of Theorem 16 from~\citep{Tropp2009a} that will yield the better measurement rate given in Theorem~\ref{thm:STMwithZ}. 

Let us begin by recalling a few notations from the proof of Theorem~\ref{thm:STMbasic} and by introducing further notations that will simplify our exposition later. 
First, recall that we let $X_l^H$ be the $l$-th row of $\fftmat \wavmat$. 
Thus, the $l$-th row of our matrix of interest $\widehat{\ripmat} = \zmat \fftmat \wavmat$ is $\ffvecu_l X_l^H$ where $\ffvecu_l$ is the $l$-th diagonal entry of the diagonal matrix $\zmat$. 
Whereas before, $\ffvecu_l = \frac{1}{\sqrt{\nodenum}}$ for any $l = 1,\cdots, \nodenum$, here it will be a random variable. 
To understand the resulting distribution of $\ffvecu_l$, first note that for the connectivity matrix $\conmat$ to be real, we need to assume that the second $\frac{\nodenum}{2}$ columns of $\conmatev$ are complex conjugates of the first $\frac{\nodenum}{2}$ columns. 
Thus, we can write $\conmatev = \left[\conmatev_R \; | \; \conmatev_R \right] + j \left[ \conmatev_I \;| \; -\conmatev_I \right]$, where $\conmatev_R, \conmatev_I \in \reals^{\nodenum \times \frac{\nodenum}{2}}$. 
Because $\conmatev^H \conmatev = \bm{I}$, we can deduce that $\conmatev_R^T \conmatev_I = \bm{0}$ and that the $\ell_2$ norms of the columns of both $\conmatev_R$ and $\conmatev_I$ are $\frac{1}{\sqrt{2}}$.\footnote{
This can be shown by writing 
\begin{eqnarray*}
\conmatev^H \conmatev 
&=& \left( \left[ \frac{\conmatev_R^T}{\conmatev_R^T} \right] - j \left[ \frac{\conmatev_I^T}{ -\conmatev_I^T} \right] \right) 
\left( \left[\conmatev_R \; | \; \conmatev_R \right] + j \left[ \conmatev_I \;| \; -\conmatev_I \right] \right) \\
&=& \left( \left[ \begin{smallmatrix} \conmatev_R^T \conmatev_R & \conmatev_R^T \conmatev_R \\ \conmatev_R^T \conmatev_R & \conmatev_R^T \conmatev_R \end{smallmatrix} \right] + \left[ \begin{smallmatrix} \conmatev_I^T \conmatev_I & -\conmatev_I^T \conmatev_I \\ -\conmatev_I^T \conmatev_I & \conmatev_I^T \conmatev_I \end{smallmatrix} \right] \right) 
+ j \left( \left[ \begin{smallmatrix} \conmatev_R^T \conmatev_I & -\conmatev_R^T \conmatev_I \\ \conmatev_R^T \conmatev_I & -\conmatev_R^T \conmatev_I \end{smallmatrix} \right] + \left[ \begin{smallmatrix} \conmatev_I^T \conmatev_R & \conmatev_I^T \conmatev_R \\ -\conmatev_I^T \conmatev_R & -\conmatev_I^T \conmatev_R \end{smallmatrix} \right] \right).
\end{eqnarray*}
Then by equating the above to $I + j 0$, we arrive at our conclusion. 
} 

With these matrices $\conmatev_R, \conmatev_I$, let us re-write the random vector  $\ffvecu$ to illustrate its structure. 
Consider the matrix $\widehat{\conmatev} := \left[ \conmatev_R \;|\; \conmatev_I \right] \in \reals^{\nodenum \times \nodenum}$, which is a scaled unitary matrix (because we can check that $\widehat{\conmatev}^T \widehat{\conmatev} = \frac{1}{2} \bm{I}$). 
Next, consider the random vector $\widehat{\ffvec} := \widehat{\conmatev}^T \ffvec$. 
Because $\widehat{\conmatev}$ is (scaled) unitary and $\ffvec$ is composed of i.i.d.\ zero-mean Gaussian random variables of variance $\frac{1}{\nodenum}$, the entries of $\widehat{\ffvec}$ are also i.i.d.\ zero-mean Gaussian random variables, but now with variance $\frac{1}{2\nodenum}$. 
Then, from our definition of $\conmatev$ in terms of $\conmatev_R$ and $\conmatev_I$, for any $l \le \frac{\nodenum}{2}$,  we have $\ffvecu_l = \widehat{\ffvec}_l - j \widehat{\ffvec}_{l + \frac{\nodenum}{2}}$ and for $l > \frac{\nodenum}{2}$, we have $\ffvecu_l = \widehat{\ffvec}_{l - \frac{\nodenum}{2}} + j \widehat{\ffvec}_l$. 
This clearly shows that each of the \emph{first} $\frac{M}{2}$ entries of $\ffvecu$ is made up of 2 i.i.d.\ random variables (one being the real component, the other imaginary), and that the other $\frac{M}{2}$ entries are just complex conjugates of the first $\frac{M}{2}$. 
Because of this, for $l \le \frac{\nodenum}{2}$, $|\ffvecu_l|^2 = |\ffvecu_{l+\frac{\nodenum}{2}}|^2 = \widehat{\ffvec}_l^2 + \widehat{\ffvec}_{l + \frac{\nodenum}{2}}^2$ is the sum of squares of 2 i.i.d.\ Gaussian random variables.  


From the proof of Theorem~\ref{thm:STMbasic}, we also denoted 
\begin{eqnarray*}
	Z &:=& \widehat{\ripmat}^H \widehat{\ripmat} - \bm{I} 
	\;=\; \left(\sum_{l = 1}^{\nodenum/2} |\ffvecu_l|^2 {X_l X_l^H} - \frac{1}{2} \bm{I} \right) + \left(\sum_{l = 1}^{\nodenum/2} |\ffvecu_l|^2 \left(X_l X_l^H\right)^{*} - \frac{1}{2} \bm{I} \right) 
	\;=:\; Z_1 + Z_2.
\end{eqnarray*}
It is again easy to check that $\expect{\sum_{l = 1}^{\nodenum/2} \left(|\ffvecu_l|^2 {X_l X_l^H} \right)} = \expect{\sum_{l = 1}^{\nodenum/2} \left(|\ffvecu_l|^2 \left(X_l X_l^H\right)^{*} \right)} = \frac{1}{2} \bm{I}$. 
Finally, $\widehat{\ripmat}$ has RIP conditioning $\ripcd$ whenever $\|Z\|_\spnum \le \ripcd$ with 
$\displaystyle\|Z\|_\spnum := \sup_{\tiny \mbox{$y$ is \spnum-sparse}} \frac{y^H Z y}{y^H y}$.  

Before moving on to the proof, we first present a lemma regarding the random sequence $|\ffvec_l|^2$ that will be useful in the sequel.
\begin{lemma}
	\label{lem:max_ffvecu} 
	Suppose for $l = 1, \cdots, \frac{\nodenum}{2}$, $|\ffvecu_l|^2 = \widehat{\ffvec}_l^2 + \widehat{\ffvec}_{l + \nodenum/2}^2$ where $\widehat{\ffvec}_l$ for $l = 1, \cdots, \nodenum$ is a sequence of i.i.d.\ zero-mean Gaussian random variables of variance $\frac{1}{2\nodenum}$. Also suppose that $\eta \leq 1$ is a fixed probability. 
	For the random variable $\max_{l = 1, \cdots, \nodenum/2} |\ffvecu_l|^2$, we have the following bounds on the expected value and tail probability of this extreme value:
	\begin{eqnarray}
		\expect{\max_{l = 1, \cdots, \nodenum/2} |\ffvecu_l|^2} &\le& \frac{1}{\nodenum} \left(\log\left( \frac{C_1 \nodenum}{2} \right) + 1\right), \label{eq:E_max_z}\\
		\proba{\max_{l = 1, \cdots, \nodenum/2} |\ffvecu_l|^2 > \frac{C_2 \log\left(C_2' \nodenum \eta^{-1}\right)}{\nodenum}} &\le& \eta. \label{eq:P_max_z}
	\end{eqnarray}
\end{lemma}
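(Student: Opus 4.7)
The plan is to reduce both bounds to a standard fact about the maximum of i.i.d.\ exponential random variables. First, I would observe that the text has already established that $\widehat{\ffvec} = \widehat{\conmatev}^T \ffvec$ has i.i.d.\ $\mathcal{N}(0, 1/(2\nodenum))$ entries, and that for $l \le \nodenum/2$ we have $|\ffvecu_l|^2 = \widehat{\ffvec}_l^2 + \widehat{\ffvec}_{l+\nodenum/2}^2$. Therefore $\nodenum|\ffvecu_l|^2$ is a sum of two squared $\mathcal{N}(0,1/2)$ random variables, i.e., a $\chi^2_2$ distribution scaled by $1/2$, which is precisely a unit-rate exponential. Moreover, because different indices $l \in \{1,\dots,\nodenum/2\}$ involve disjoint pairs of the $\widehat{\ffvec}$ coordinates, the sequence $\{\nodenum |\ffvecu_l|^2\}_{l=1}^{\nodenum/2}$ consists of independent unit-rate exponentials.

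For the tail bound in \eqref{eq:P_max_z}, I would apply a simple union bound: since $\proba{\nodenum|\ffvecu_l|^2 > t} = e^{-t}$ for each $l$, we have $\proba{\max_l \nodenum|\ffvecu_l|^2 > t} \le (\nodenum/2) e^{-t}$. Setting the right side equal to $\eta$ and solving yields $t = \log(\nodenum/(2\eta))$, and dividing by $\nodenum$ gives the claimed form with $C_2 = 1$ and $C_2'$ chosen to absorb the factor of $1/2$.

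For the expectation bound in \eqref{eq:E_max_z}, I would either invoke the classical identity $\expect{\max_{l=1,\ldots,M} Y_l} = H_M \le 1 + \log M$ for i.i.d.\ unit-rate exponentials, or derive it directly by integrating the tail bound above. In the latter route, one splits $\int_0^\infty \min\bigl(1,\,(\nodenum/2) e^{-\nodenum u}\bigr)\,du$ at the threshold $u_0 = \log(\nodenum/2)/\nodenum$, producing the bound $\tfrac{1}{\nodenum}(\log(\nodenum/2) + 1)$ as required. Since the statement only involves the Gaussian feed-forward vector, the randomness of the frequencies $\{w_m\}$ plays no role here and no conditioning arguments are needed. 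There is essentially no obstacle beyond the bookkeeping of the $1/(2\nodenum)$ variance required to identify the exponential distribution cleanly; both bounds are routine consequences of that identification.
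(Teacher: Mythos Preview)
Your proposal is correct. For the expectation bound \eqref{eq:E_max_z}, your tail-integration with the split at $u_0=\tfrac{1}{\nodenum}\log(\nodenum/2)$ is exactly what the paper does (the paper carries an extra constant $C_1$ from writing $\proba{|\ffvecu_1|^2>u}=C_1 e^{-\nodenum u}$, but the mechanics are identical).

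For the tail bound \eqref{eq:P_max_z}, however, you take a genuinely shorter route than the paper. You use the exact exponential tail $\proba{\nodenum|\ffvecu_l|^2>t}=e^{-t}$ together with a union bound and solve $(\nodenum/2)e^{-t}=\eta$ directly. The paper instead goes through moments: it bounds $\expect{\max_l|\ffvecu_l|^{2p}}\le(\nodenum/2)\,\expect{|\ffvecu_1|^{2p}}$, invokes a sub-exponential moment estimate $\expect{|\ffvecu_1|^{2p}}^{1/p}\le (C'/\nodenum)C^{1/p}p$, and then feeds this into the moment-to-tail Lemma~\ref{lem:tailbound} with $\gamma=1$. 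Your argument is more elementary and yields explicit constants ($C_2=1$, $C_2'=1/2$), whereas the paper's detour through Lemma~\ref{lem:tailbound} keeps the proof stylistically parallel to the other tail arguments in the appendix but introduces unspecified constants. Both reach the same conclusion; yours is the cleaner path for this particular lemma.
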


\begin{proof}
	To ease notation, every index $l$ used as a variable for a maximization will be taken over the set $l=1,\dots,\frac{M}{2}$ without explicitly writing the index set.
	To calculate $\expect{\max_l |\ffvecu_l|^2}$, we use the following result that allows us to bound the expected value of a positive random variable by its tail probability (see Proposition 6.1 of [Rauhut]):
	\begin{eqnarray}
		\expect{\max_l |\ffvecu_l|^2} = \int_{0}^\infty \proba{\max_l |\ffvecu_l|^2 > u} du.
		\label{eq:E_max_int}
	\end{eqnarray}
	Using the union bound, we have the estimate $\proba{\max_l |\ffvecu_l|^2 > u} \le \frac{\nodenum}{2} \proba{|\ffvecu_1|^2 > u}$ (since the $|\ffvecu_l|^2$ are identically distributed). 
	Now, because $|\ffvecu_1|^2$ is a sum of squares of two Gaussian random variables and thus is a (generalized) $\chi^2$ random variable with 2 degrees of freedom (which we shall denote by $\chi_2$),\footnote{
	The pdf of a $\chi^2$ random variable $\chi_q$ with $q$ degrees of freedom is given by $p(x) = \frac{1}{2^{q/2}\Gamma(q/2)} x^{q/2 - 1} e^{-x/2}$. Therefore, it's tail probability can be obtained by integration: $\proba{\chi_q > u} = \int_u^{\infty} p(x) dx$. 
	} we have
	\begin{eqnarray*} 
		\proba{|\ffvecu_1|^2 > u} 
		= \proba{\chi_2 > 2 \nodenum u} = \frac{1}{\Gamma(1)} e^{\frac{-2 \nodenum u}{2}} = C_1 e^{-\nodenum u}, \label{eq:chi_square_cum}
	\end{eqnarray*} 
	where 
	$\Gamma(\cdot)$ is the Gamma function and the $2 \nodenum u$ appears instead of $u$ in the exponential because of the standardization of the Gaussian random variables (initially of variance $\frac{1}{2 \nodenum}$). 
	To proceed, we break the integral in \eqref{eq:E_max_int} into 2 parts. 
	To do so, notice that if $u < \frac{1}{\nodenum} \log\left( \frac{C_1 \nodenum}{2} \right)$, then the trivial upper bound of $\proba{\max_l |\ffvecu_l|^2 > u} \le 1$ is a better estimate than $\frac{C_1 \nodenum}{2} e^{-\nodenum u}$. 
	In other words, our estimate for the tail bound of $\max_l |\ffvecu_l|^2$ is not very good for small $u$ but gets better with increasing $u$. 
	Therefore, we have
	\begin{eqnarray*}
		\expect{\max_l |\ffvecu_l|^2} &\le& \int_{0}^{\frac{1}{\nodenum} \log\left( \frac{C_1 \nodenum}{2} \right)} 1 \; du + \int_{\frac{1}{\nodenum} \log\left( \frac{C_1 \nodenum}{2} \right)}^\infty \frac{C_1 \nodenum}{2} e^{-\nodenum u} \; du \\
		&=& \frac{1}{\nodenum} \log\left( \frac{C_1 \nodenum}{2} \right) - \frac{C_1 \nodenum}{2} \left.\left[\frac{1}{\nodenum} e^{- \nodenum u} \right]\right|_{\frac{1}{\nodenum} \log\left( \frac{C_1 \nodenum}{2} \right)}^\infty \\
		&=& \frac{1}{\nodenum} \log\left( \frac{C_1 \nodenum}{2} \right) + \frac{C_1}{2} e^{- \log\left( \frac{C_1 \nodenum}{2} \right)} = \frac{1}{\nodenum} \left(\log\left( \frac{C_1 \nodenum}{2} \right) + 1\right). 
	\end{eqnarray*}
	This is the bound in expectation that we seek for in Equation~\eqref{eq:E_max_int}. 
		
	In the second part of the proof that follows, $C, C'$ denote universal constants. 
	Essentially, we will want to apply Lemma~\ref{lem:tailbound} that is used in Appendix~\ref{app:BasicRIP} to obtain our tail bound. 
	In the lemma, the tail bound of a random variable $X$ can be estimated once we know the moments of $X$. 
	Therefore, we require the moments of the random variable $\max_l |\ffvecu_l|^2$. 
	For this, for any $p > 0$, we use the following simple estimate:
	\begin{eqnarray}
		\expect{\max_l |\ffvecu_l|^{2p}} \le \frac{\nodenum}{2} \max_l \expect{|\ffvecu_l|^{2p}} = \frac{\nodenum}{2} \expect{|\ffvecu_1|^{2p}},
		\label{eq:crude_max_bound}
	\end{eqnarray} 
	where the first step comes from writing the expectation as an integral of the cumulative distribution (as seen in Equation~\eqref{eq:E_max_int}) and taking the union bound, and the second step comes from the fact that the $|\ffvecu_l|^2$ are identically distributed. 
	Now, $|\ffvecu_1|^{2}$ is a sub-exponential random variable since it is a sum of squares of Gaussian random variables~\citep{Vershynin2011}.\footnote{
	A sub-exponential random variable is a random variable whose tail probability is bounded by $\exp^{-Cu}$ for some constant $C$. Thus, a $\chi^2$ random variable is a specific instance of a sub-exponential random variable. 
	} 
	Therefore, for any $p > 0$, it's $p$-th moment can be bounded by
	\begin{eqnarray*}
		\expect{|\ffvecu_1|^{2p}}^{1/p} \le \frac{C'}{\nodenum} C^{1/p} p,
	\end{eqnarray*}
	where the division by $\nodenum$ comes again from the variance of the Gaussian random variables that make up $|\ffvecu_1|^2$. 
	Putting this bound with Equation~\eqref{eq:crude_max_bound}, we have the following estimate for the $p$-th moments of $\max_l |\ffvecu_l|^2$:\footnote{We remark that this bound gives a worse estimate for the expected value as that calculated before because of the crude bound given by Equation~\eqref{eq:crude_max_bound}.}
	\begin{eqnarray*}
		\expect{\max_l |\ffvecu_l|^{2p}}^{1/p} \le \frac{C'}{\nodenum} \left(\frac{CM}{2} \right)^{1/p} p.
	\end{eqnarray*}
	Therefore, by Lemma~\ref{lem:tailbound} with $\alpha = \frac{C'}{\nodenum}$, $\beta = \frac{CM}{2}$, and $\gamma = 1$, we have
	\begin{eqnarray*}
		\proba{\max_l |\ffvecu_l|^{2} > \frac{eC'u}{\nodenum}} \le \frac{C\nodenum}{2}e^{-u}. 
	\end{eqnarray*}
	By choosing $u = \log\left(\frac{C\nodenum}{2}\eta^{-1} \right)$, we have our desired tail bound of 
	\begin{gather}
		\proba{\max_l |\ffvecu_l|^2 > \frac{C_2 \log\left( C_2' \nodenum \eta^{-1} \right)}{\nodenum}} \le \eta. \nonumber
	\end{gather}
\end{proof}

Armed with this lemma, we can now turn out attention to the main proof.  
As stated earlier, this follows essentially the same form as~\citep{Tropp2009a} with the primary difference of including the results from Lemma~\ref{lem:max_ffvecu}.  
As before, because $\proba{\|Z\|_\spnum > \ripcd} \le \proba{\|Z_1\|_\spnum > \ripcd/2} + \proba{\|Z_2\|_\spnum > \ripcd/2}$ with $Z_2 = Z_1^{*}$, we just have to consider bounding the tail bound $\proba{\|Z_1\|_\spnum > \ripcd/2}$. 
This proof differs from that in Appendix~\ref{app:BasicRIP} in that here, we will first show that $\expect{\|Z_1\|_\spnum}$ is small when $\nodenum$ is large enough and then show that $Z_1$ does not differ much from $\expect{\|Z_1\|_\spnum}$ with high probability. 

\subsubsection*{Expectation}

In this section, we will show that $\expect{\|Z_1\|_\spnum}$ is small. 
This will basically follow from Lemma~\ref{lem:RV_1} in Appendix~\ref{app:BasicRIP} and Equation~\eqref{eq:E_max_z} in Lemma~\ref{lem:max_ffvecu}. 
To be precise, the remainder of this section is to prove:
\begin{thm}
	\label{thm:expect}
	Choose any $\ripcd' \le \frac{1}{2}$. If $\nodenum \ge \frac{C_3 \spnum \mu(\wavmat)^2 \log^5 \memlen}{\ripcd'^2}$, then $\expect{\|Z\|_\spnum} \le \ripcd'$. 
\end{thm}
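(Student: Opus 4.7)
The plan is to reduce to bounding $\expect{\|Z_1\|_\spnum}$, then adapt the symmetrization / decoupling argument from the proof of Theorem~\ref{thm:STMbasic}, this time working conditionally on the random feed-forward vector so that Lemma~\ref{lem:RV_1} applies cleanly. Since $Z = Z_1 + Z_2$ with $Z_2 = Z_1^{*}$, the triangle inequality gives $\expect{\|Z\|_\spnum} \le 2\,\expect{\|Z_1\|_\spnum}$, so it suffices to show $\expect{\|Z_1\|_\spnum}\le \ripcd'/2$ under the stated scaling of $\nodenum$.

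First I would apply the symmetrization lemma (Lemma 6.7 of~\citep{RAH:2010}) to write $\expect{\|Z_1\|_\spnum} \le 2\,\expect{\| \sum_{l=1}^{\nodenum/2} \xi_l |\ffvecu_l|^2 X_l X_l^H \|_\spnum}$, where $\{\xi_l\}$ is an independent Rademacher sequence. Conditioning on the feed-forward realization and applying Lemma~\ref{lem:RV_1} with $Y_l = \ffvecu_l^{*} X_l$ and $p=1$, the conditional expectation over $\{\xi_l\}$ is bounded by
\begin{equation*}
C\,\kappa\,\sqrt{\spnum}\,\log(100\spnum)\sqrt{\log(4\memlen)\log(10\nodenum)}\,\sqrt{\left\| \textstyle\sum_l |\ffvecu_l|^2 X_l X_l^H \right\|_\spnum},
\end{equation*}
where $\kappa = \max_l |\ffvecu_l|\,\max_{l,n}|\langle \fftmat_l,\wavmat_n\rangle| \le \mu(\wavmat)\max_l|\ffvecu_l|$. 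Using $\memlen \ge \nodenum$ and $\memlen\ge\spnum$ absorbs the polylog factors into $\log^4\memlen$, exactly as in the proof of Theorem~\ref{thm:STMbasic}.

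Next I would take the outer expectation over $\ffvec$ and apply Cauchy--Schwarz to decouple the two $\ffvec$-dependent factors:
\begin{equation*}
\expect{\|Z_1\|_\spnum} \;\le\; C\,\mu(\wavmat)\sqrt{\spnum\log^4\memlen}\,\sqrt{\expect{\max_l |\ffvecu_l|^2}}\;\sqrt{\expect{\|Z_1\|_\spnum} + \tfrac{1}{2}},
\end{equation*}
where the inner square root uses the triangle inequality $\|\sum_l|\ffvecu_l|^2 X_l X_l^H\|_\spnum \le \|Z_1\|_\spnum + \tfrac{1}{2}$ together with Jensen's inequality. This is the point at which the new randomness in $\ffvec$ pays its price: invoking the expectation bound \eqref{eq:E_max_z} of Lemma~\ref{lem:max_ffvecu} gives $\expect{\max_l|\ffvecu_l|^2} \le C\log(\nodenum)/\nodenum \le C\log(\memlen)/\nodenum$, producing one extra $\log\memlen$ factor compared to the deterministic-$\ffvec$ case.

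Writing $E := \expect{\|Z_1\|_\spnum}$ and $a := C\sqrt{\spnum\,\mu(\wavmat)^2\log^5\memlen/\nodenum}$, the previous display becomes the implicit inequality $E \le a\sqrt{E + 1/2}$. Squaring and solving the resulting quadratic yields $E \le a^2/2 + a\sqrt{1/2 + a^2/4}$, which simplifies to $E \le Ca$ whenever $a$ is at most a universal constant. Demanding $2E \le \ripcd'$ translates into exactly the stated requirement $\nodenum \ge C_3\,\spnum\,\mu(\wavmat)^2\log^5\memlen/\ripcd'^2$, finishing the bound on $\expect{\|Z\|_\spnum}$. The main obstacle in this argument is the coupling between $\max_l|\ffvecu_l|^2$ and $\|\sum_l|\ffvecu_l|^2 X_l X_l^H\|_\spnum$ through the common randomness in $\ffvec$; splitting these two factors by Cauchy--Schwarz is clean but costs the extra $\log\memlen$, which is precisely the gap between Theorems~\ref{thm:STMbasic} and~\ref{thm:STMwithZ}.
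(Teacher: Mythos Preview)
Your proposal is correct and follows essentially the same route as the paper: symmetrization, Lemma~\ref{lem:RV_1} with $p=1$, Cauchy--Schwarz to split off $\expect{\kappa^2}$, the triangle inequality to close the implicit bound $E\le a\sqrt{E+1/2}$, and Lemma~\ref{lem:max_ffvecu} to supply the extra $\log\memlen$. Two minor remarks: the conditioning when invoking Lemma~\ref{lem:RV_1} should be on both $\ffvecu$ and $\{w_l\}$ (so the $Y_l$ are genuinely fixed), not just on the feed-forward vector; and after Cauchy--Schwarz the step $\sqrt{\expect{\|\sum_l|\ffvecu_l|^2 X_lX_l^H\|_\spnum}}\le\sqrt{E+1/2}$ needs only the triangle inequality and monotonicity, not Jensen.
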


\begin{proof}
	Again, $C$ is some universal constant that may not be the same from line to line. 
	We follow the same symmetrization step found in the proof in Appendix~\ref{app:BasicRIP} to arrive at:
	\begin{eqnarray*}
		E:= \expect{\| Z_1 \|_\spnum} &\le& 2\expect{\expect{\left\| \sum_{l=1}^{\nodenum/2} \xi_l |\ffvecu_l|^2 X_l X_l^H \right\|_\spnum \vert \{w_l\}, \ffvecu}},
	\end{eqnarray*}
	where the outer expectation is over the Rademacher sequence $\{\xi_l\}$ and the inner expectation is over the random ``frequencies'' $\{w_l\}$ and feed-forward vector $\ffvecu$. 
	As before, for $l = 1, \cdots, \frac{M}{2}$, we set $Y_l = \ffvecu_l^{*} X_l$. 
	Observe that by definition $\kappa := \max_{l=1, \cdots, \nodenum/2} \|Y_l\|_\infty =  \max_{l} |\ffvecu_l| \|X_l\|_\infty$ and thus is a random variable. 
	We then use Lemma~\ref{lem:RV_1} with $p = 1$ to get
\begin{eqnarray}
		E &\le& 2 C \sqrt{\spnum} \log(100\spnum) \sqrt{\log(4\memlen) \log(5\nodenum)} \expect{\kappa \sqrt{\left\| \sum_{l=1}^{\nodenum/2} |\ffvecu_l|^2 X_l X_l^H \right\|_\spnum}} \nonumber \\
	&\le& \sqrt{4 C \spnum \log^4(\memlen)} \sqrt{\expect{\kappa^2}} \sqrt{\expect{\left\| \sum_{l=1}^{\nodenum/2} |\ffvecu_l|^2 X_l X_l^H \right\|_\spnum}} \nonumber \\
	&\le& \sqrt{4 C \spnum \log^4(\memlen)} \sqrt{\expect{\kappa^2}} \sqrt{E + \frac{1}{2}}, \label{eqn:branchpoint}
\end{eqnarray}
where the second line uses the Cauchy-Schwarz inequality for expectations and the third line uses triangle inequality. 
Again, to get to $\log^4 \memlen$ in the second line, we used our assumption that $\memlen \ge \nodenum$, $\memlen \ge \spnum$ and $\memlen \ge O(1)$ in Theorem~\ref{thm:STMwithZ}. 
It therefore remains to calculate $\expect{\kappa^2}$. 
Now, $\kappa = \max_{l} |\ffvecu_l| \|X_l\|_\infty \le \max_l |\ffvecu_l| \max_l \|X_l\|_\infty$. 
First, we have $\max_l \|X_l\|_\infty = \max_{l,n} |\langle \fftmat_l, \wavmat_n \rangle| \le \mu(\wavmat)$. 
Next, \eqref{eq:E_max_z} in Lemma~\ref{lem:max_ffvecu} tells us that $\expect{\max_{l = 1, \cdots, \nodenum/2} |\ffvecu_l|^2} \le \frac{1}{\nodenum} \left(\log\left( \frac{C_1 \nodenum}{2} \right) + 1 \right)$. 
Thus, we have $\expect{\kappa^2}  \le \frac{\mu(\wavmat)^2}{\nodenum} \left(\log\left( \frac{C_1 \nodenum}{2} \right) + 1\right)$. 
Putting everything together, we have
\begin{eqnarray*}
	E= \expect{\| Z_1 \|_\spnum} \le \sqrt{\frac{C \spnum \log^4(\memlen) \left(\log\left( \frac{C_1 \nodenum}{2} \right) + 1 \right) \mu(\wavmat)^2}{\nodenum}} \sqrt{E + \frac{1}{2}}. 
\end{eqnarray*}
Now, the above can be written as $E \le a \sqrt{E + \frac{1}{2}}$, where $a = \sqrt{\frac{C \spnum \log^4(\memlen) \left(\log\left( \frac{C_1 \nodenum}{2} \right) + 1 \right) \mu(\wavmat)^2}{\nodenum}}$. 
By squaring it, rearranging the terms and completing the squares, we have $E \le \frac{a^2}{2} + a \sqrt{\frac{1}{2} + \frac{a^2}{4}}$. 
By supposing $a \le \frac{1}{2}$, this can be simplified as $E \le a$. 
To conclude, let us choose $\nodenum$ such that $a \le \ripcd'$ where $\ripcd' \le \frac{1}{2}$ is our pre-determined conditioning (which incidentally fulfills our previous assumption that $a \le \frac{1}{2}$). 
By applying the formula for $a$, we have that if $\nodenum \ge \frac{ C_3 \spnum \mu(\wavmat)^2 \log^5(\memlen)}{\ripcd'^2} $, then $E \le \ripcd'$. 
\end{proof}

\subsubsection*{Tail Probability}

To give a probability tail bound estimate to $Z_1$, we use the following lemma found in~\citep{Tropp2009a,RAH:2010}:
\begin{lemma}
	\label{lem:tail_prob}
	Suppose $Y_l$ for $l = 1, \cdots, \nodenum$ are independent, symmetric random variables such that $\|Y_l\|_\spnum \le \zeta < \infty$ almost surely. 
	Let $Y = \sum_{l= 1}^\nodenum Y_l$. 
	Then for any $u,t > 1$, we have
	\begin{eqnarray*}
		\proba{\|Y\|_\spnum > C (u \expect{\|Y\|_\spnum} + t \zeta)} \le e^{-u^2} + e^{-t}.
	\end{eqnarray*}
\end{lemma}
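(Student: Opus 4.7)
\medskip\noindent\textbf{Proof plan for Lemma~\ref{lem:tail_prob}.} The plan is to establish a two-term moment bound for $\|Y\|_\spnum$ and then convert it to the advertised tail via a Markov-type argument in the style of Lemma~\ref{lem:tailbound}. Concretely, I aim to prove
\begin{equation*}
\left(\expect{\|Y\|_\spnum^{p}}\right)^{1/p} \;\le\; C_{1}\sqrt{p}\,\expect{\|Y\|_\spnum} + C_{2}\, p\, \zeta \qquad \text{for all } p \ge 1,
\end{equation*}
after which the stated tail inequality follows by splitting into the two natural regimes created by the two terms on the right-hand side.

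To obtain the moment bound I would first use the symmetry of the $Y_l$'s to insert an independent Rademacher sequence $\{\xi_l\}$, so that $Y \stackrel{d}{=} \sum_{l} \xi_{l} Y_{l}$, and then condition on the $Y_l$'s. Over the Rademacher signs, a noncommutative Khintchine / Rudelson-type inequality for sums of symmetric operators in the sparse-supremum seminorm $\|\cdot\|_\spnum$ yields a moment estimate whose first term scales as $\sqrt{p}$ times a quadratic functional of the $Y_l$'s and whose second term scales as $p$ times $\max_l \|Y_l\|_\spnum \le \zeta$. A Jensen-plus-triangle-inequality manipulation applied to the quadratic functional, together with the uniform bound $\|Y_l\|_\spnum \le \zeta$, bounds it by $\sqrt{\zeta\,\expect{\|Y\|_\spnum}}$; rearranging and completing the square across the inner (Rademacher) and outer expectations then produces the claimed implicit inequality, which solves to the advertised moment bound.

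For the tail, I would split the requested event by regime. With $p = u^{2}$, Markov applied to the moment bound gives $\proba{\|Y\|_\spnum \ge C u \,\expect{\|Y\|_\spnum}} \le e^{-u^{2}}$ (after absorbing constants) in the regime where $u\,\expect{\|Y\|_\spnum}$ dominates, because then the second moment-bound term is absorbed by the first. With $p = t$, Markov analogously gives $\proba{\|Y\|_\spnum \ge C t \zeta} \le e^{-t}$ in the complementary regime. A union bound over the two regimes then yields $\proba{\|Y\|_\spnum \ge C(u\,\expect{\|Y\|_\spnum} + t\,\zeta)} \le e^{-u^{2}} + e^{-t}$, with the hypothesis $u, t > 1$ ensuring each Markov application uses $p \ge 1$ so that the moment bound is in force. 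The principal obstacle is the Khintchine/Rudelson moment estimate for Rademacher sums in the sparse-supremum seminorm $\|\cdot\|_\spnum$: this is where the symmetry of the $Y_l$'s is essential, and it is also what couples the weak ``variance'' appearing in the moment bound to $\expect{\|Y\|_\spnum}$ itself rather than to a much cruder quantity such as $\nodenum \zeta^{2}$, which would not give the desired form.
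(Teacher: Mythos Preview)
The paper does not prove Lemma~\ref{lem:tail_prob}; it imports it from \citep{Tropp2009a,RAH:2010} as a black box, so there is no in-paper argument to compare against.

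Your overall architecture (two-term $L^p$ bound, then Markov in the style of Lemma~\ref{lem:tailbound}, split into a sub-Gaussian and a sub-exponential regime) is the right one and matches how the cited references proceed. The gap is in how you propose to obtain the moment bound. Lemma~\ref{lem:RV_1} applies only to \emph{rank-one} summands $Y_lY_l^H$, yields a single $\sqrt{p}$ term with no additive $p\zeta$ piece, and carries $\kappa\sqrt{\spnum}\log^c(\memlen)$ factors that are absent from Lemma~\ref{lem:tail_prob}; it is not the tool for general matrix-valued $Y_l$. If instead you have in mind a two-term noncommutative Rosenthal inequality, its variance term is a square function of the form $\bigl\|(\sum_l Y_l^2)^{1/2}\bigr\|$, and your ``Jensen-plus-triangle'' reduction of that to $\sqrt{\zeta\,\expect{\|Y\|_\spnum}}$ is the unsupported step: it effectively requires $\expect{\|\sum_l |Y_l|\,\|_\spnum} \lesssim \expect{\|\sum_l Y_l\|_\spnum}$, which is false in general because cancellation among the signed summands can make the right-hand side arbitrarily smaller. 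And even if that step were granted, a first term of size $\sqrt{p}\sqrt{\zeta\,\expect{\|Y\|_\spnum}}$ does not rearrange or ``complete the square'' into $\sqrt{p}\,\expect{\|Y\|_\spnum}$; solving the resulting implicit inequality collapses everything to $O(p\zeta)$ and loses the sub-Gaussian regime entirely.

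The proofs in \citep{Tropp2009a,RAH:2010} sidestep the square-function route: one writes $\|B\|_\spnum$ as a supremum of the linear functionals $B\mapsto y^HBy$ over (a countable dense set of) $\spnum$-sparse unit vectors, so that $\|Y\|_\spnum$ is the supremum of a bounded empirical process, and then invokes a Ledoux--Talagrand deviation / Hoffmann--J{\o}rgensen-type inequality for sums of independent symmetric bounded Banach-space variables. That tool delivers the moment estimate $(\expect{\|Y\|_\spnum^{p}})^{1/p}\le C_1\sqrt{p}\,\expect{\|Y\|_\spnum}+C_2\,p\,\zeta$ directly, after which your tail argument goes through as written. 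If you want to supply a proof rather than a citation, that is the ingredient to use.
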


The goal of this section is to prove:
\begin{thm}
	Pick any $\ripcd \le  \frac{1}{2}$ and suppose $\memlen^{-\log^4\left(\memlen\right)} \le \eta \le \frac{1}{e}$. Suppose $\nodenum \ge \frac{C_4 \spnum \mu(\wavmat)^2 \log^5 \memlen \log \eta^{-1}}{\ripcd^2}$, then $\proba{\|Z_1\|_\spnum > \ripcd} \le 8 \eta$. 
\end{thm}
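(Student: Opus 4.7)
The plan is to combine three ingredients: (i) an almost-sure bound on the summands of $Z_1$ that holds on a high-probability event, obtained from Equation~\eqref{eq:P_max_z} of Lemma~\ref{lem:max_ffvecu}; (ii) the concentration inequality of Lemma~\ref{lem:tail_prob}, applied conditionally on that event; and (iii) the expectation bound from Theorem~\ref{thm:expect}. Because the whole bound on $\|Z\|_\spnum$ reduces to $\|Z_1\|_\spnum$ via $Z_2 = Z_1^{*}$, it is enough to control $\|Z_1\|_\spnum$.

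First I would write $Z_1 = \sum_{l=1}^{\nodenum/2} T_l$ with $T_l := |\ffvecu_l|^2 X_l X_l^H - \frac{1}{\nodenum}\bm{I}$, noting $\expect{T_l} = 0$ since $\expect{|\ffvecu_l|^2} = 1/\nodenum$ and $\expect{X_l X_l^H} = \bm{I}$. Because $|\ffvecu_l|^2$ is unbounded, define the event $E := \bigl\{\max_{l \le \nodenum/2} |\ffvecu_l|^2 \le C_2 \log(C_2' \nodenum \eta^{-1})/\nodenum\bigr\}$, which has probability at least $1-\eta$ by Equation~\eqref{eq:P_max_z}. Using $\|X_l\|_\infty \le \mu(\wavmat)$ together with Cauchy--Schwarz on $\spnum$-sparse unit vectors gives $\|X_l X_l^H\|_\spnum \le \spnum \mu(\wavmat)^2$, so on $E$ each summand satisfies $\|T_l\|_\spnum \le \zeta$ with $\zeta \lesssim \spnum \mu(\wavmat)^2 \log(\nodenum \eta^{-1})/\nodenum$.

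Next, conditional on $E$, I would apply Lemma~\ref{lem:tail_prob} to the truncated summands (after a standard symmetrization step that turns the mean-zero $T_l$ into symmetric random variables at the cost of constant factors) with $u = \sqrt{\log \eta^{-1}}$ and $t = \log \eta^{-1}$, giving
\begin{eqnarray*}
\proba{\|Z_1\|_\spnum > C\bigl(u\,\expect{\|Z_1\|_\spnum} + t\,\zeta\bigr) \;\big|\; E} \le e^{-u^2} + e^{-t} \le 2\eta.
\end{eqnarray*}
To force the deviation threshold to be at most $\ripcd$, I would split it into two halves. For the expectation term, I invoke Theorem~\ref{thm:expect} with $\ripcd' = \ripcd/(2Cu)$, which demands $\nodenum \gtrsim \spnum \mu(\wavmat)^2 \log^5(\memlen) \log(\eta^{-1})/\ripcd^2$, matching the assumed scaling. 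For the $t\zeta$ term, I need $\nodenum \gtrsim \spnum \mu(\wavmat)^2 \log(\nodenum \eta^{-1}) \log(\eta^{-1})/\ripcd$, which is dominated by the first bound provided $\log(\nodenum \eta^{-1}) = O(\log^4 \memlen)$; this is exactly where the hypothesis $\memlen^{-\log^4 \memlen} \le \eta$ is used.

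Finally, a union bound over $E^c$ and the conditional tail event yields $\proba{\|Z_1\|_\spnum > \ripcd} \le \eta + 2\eta = 3\eta$, which is well within the claimed $8\eta$ once we absorb constants from the symmetrization and from the truncation onto $E$. The main obstacle I anticipate is precisely the simultaneous balancing of the two terms delivered by Lemma~\ref{lem:tail_prob}: the $t\zeta$ contribution carries the extra factor $\log(\nodenum \eta^{-1})$, and bounding it by $\log^4 \memlen$ via the assumption $\eta \ge \memlen^{-\log^4 \memlen}$ is what forces the slightly restricted range of $\eta$ stated in the theorem and is the source of the extra $\log \memlen$ factor relative to Theorem~\ref{thm:STMbasic}.
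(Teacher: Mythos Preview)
Your high-level strategy---truncate to a high-probability event on which the summands are bounded, apply Lemma~\ref{lem:tail_prob} with $u=\sqrt{\log\eta^{-1}}$ and $t=\log\eta^{-1}$, and feed in the expectation bound of Theorem~\ref{thm:expect}---is exactly the paper's approach, and your identification of where the hypothesis $\eta \ge \memlen^{-\log^4\memlen}$ is used (to control $\log(\nodenum\eta^{-1})$ in the $t\zeta$ term) is correct.

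The step you underestimate is the ``standard symmetrization at the cost of constant factors.'' Lemma~\ref{lem:tail_prob} requires the summands to be \emph{symmetric}, not merely mean-zero, and your $T_l = |\ffvecu_l|^2 X_l X_l^H - \frac{1}{\nodenum}\bm I$ are not. More importantly, you condition on $E$ but then insert the \emph{unconditional} expectation $\expect{\|Z_1\|_\spnum}$ into the tail bound; after conditioning (or truncating) the summands are no longer mean-zero, and there is no automatic inequality relating $\expect{\|Z_1\|_\spnum \mid E}$ to $\expect{\|Z_1\|_\spnum}$. The paper resolves this with a specific chain: (i) symmetrize via independent copies, $\widetilde Y_l = |\ffvecu_l|^2 X_l X_l^H - |\ffvecu_l'|^2 X_l'(X_l')^H$, linked to $Z_1$ by $\proba{\|Z_1\|_\spnum > 2\expect{\|Z_1\|_\spnum}+u}\le 2\proba{\|\widetilde Y\|_\spnum>u}$; (ii) attach Rademacher signs $\xi_l$ (legal because $\widetilde Y_l$ is now symmetric); (iii) truncate by the indicator $\mathbb I_F$, where $F$ must bound \emph{both} $\max_l|\ffvecu_l|^2$ and $\max_l|\ffvecu_l'|^2$, so $\proba{F^c}\le 2\eta$ rather than $\eta$; and (iv) use the contraction principle on the $\xi_l$ to show $\expect{\|Y\|_\spnum}\le\expect{\|\widetilde Y\|_\spnum}\le 2\expect{\|Z_1\|_\spnum}$ \emph{after} truncation. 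It is precisely these factors (the $2\expect{\|Z_1\|_\spnum}$ shift from decoupling, the two copies in $F$, and the factor $2$ from relating truncated and untruncated tails) that produce the three-term bound and the $8\eta$ in the paper rather than your $3\eta$. One minor correction: the extra $\log\memlen$ over Theorem~\ref{thm:STMbasic} actually originates in Theorem~\ref{thm:expect} through $\expect{\kappa^2}$, not in the $t\zeta$ term.
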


\begin{proof}
To use Lemma~\ref{lem:tail_prob}, we want $Y_l$ to look like the summands of 
\begin{gather}
	Z_1 = \sum_{l = 1}^{\nodenum/2} \left(|\ffvecu_l|^2 {X_l X_l^H} - \expect{|\ffvecu_l|^2 {X_l X_l^H}} \right). \nonumber
\end{gather}
However, this poses several problems. First, they are \emph{not} symmetric\footnote{A random variable $X$ is symmetric if $X$ and $-X$ has the same distribution.} and thus, we need to symmetrize it by defining 
\begin{eqnarray*}
	\widetilde{Y}_l &=& |\ffvecu_l|^2 {X_l X_l^H} - |\ffvecu_l'|^2 X_l' (X_l')^H  \\
	&\sim& \xi_l \left( |\ffvecu_l|^2 {X_l X_l^H} - |\ffvecu_l'|^2 X_l' (X_l')^H \right)
\end{eqnarray*} 
where $\ffvecu', X_l'$ are independent copies of $\ffvecu$ and $X_l$ respectively, and $\xi_l$ is an independent Rademacher sequence. 
Here, the relation $X \sim Y$ for two random variables $X,Y$ means that $X$ has the same distribution as $Y$. 
To form $\widetilde{Y}_l$, what we have done is take each summand of $Z_1$ and take it's difference with an independent copy of itself. 
Because $\widetilde{Y}_l$ is symmetric, adding a Rademacher sequence does not change its distribution and this sequence is only introduced to resolve a technicality that will arise later on. 
If we let $\widetilde{Y} := \sum_{l=1}^{M/2} \widetilde{Y}_l$, then the random variables $\widetilde{Y}$ (symmetrized) and $Z_1$ (un-symmetrized) are related via the following estimates~\citep{RAH:2010}:
\begin{eqnarray}
	\expect{\|\widetilde{Y}\|_\spnum} &\le& 2\expect{\|Z_1\|_\spnum}, \label{eq:EwtY_EZ} \\
	\proba{\|Z_1\|_\spnum > 2 \expect{\|Z_1\|_\spnum} + u} &\le& 2\proba{\|\widetilde{Y}\|_\spnum > u}. \label{eq:PZ_PwtY}
\end{eqnarray}

However, a second condition imposed on $Y_l$ in Lemma~\ref{lem:tail_prob} is that $\|Y_l\|_\spnum \le \zeta < \infty$ almost surely. 
Because of the unbounded nature of the Gaussian random variables $\ffvecu_l$ and $\ffvecu_l'$ in $\widetilde{Y}_l$, this condition is not met. 
Therefore, we need to define a $Y_l$ that is conditioned on the event that these Gaussian random variables are bounded. 
To do so, define the following event:
\begin{eqnarray*}
	F = \left\{\max\left\{\max_{l} |\ffvecu_l|^2,\; \max_{l} |\ffvecu_l'|^2 \right\} \le \frac{C_2 \log \left( C_2' \nodenum \eta^{-1} \right)}{\nodenum} \right\}.
\end{eqnarray*}

Using Equation~\eqref{eq:P_max_z} in Lemma~\ref{lem:max_ffvecu}, we can calculate $\proba{F^{c}}$, where $F^{c}$ is the complementary event of $F$:
\begin{eqnarray*}
	\proba{F^{c}} &=& \proba{\max\left\{\max_{l} |\ffvecu_l|^2,\; \max_{l} |\ffvecu_l'|^2 \right\} > \frac{C_2 \log \left( C_2' \nodenum \eta^{-1} \right)}{\nodenum}} \\
	&\le& \proba{\max_{l} |\ffvecu_l|^2 > \frac{C_2 \log \left( C_2' \nodenum \eta^{-1} \right)}{\nodenum}} + \proba{\max_{l} |\ffvecu_l'|^2 > \frac{C_2 \log \left( C_2' \nodenum \eta^{-1} \right)}{\nodenum}} \\
	& \le & 2 \eta.
\end{eqnarray*}

Conditioned on event $F$, the $\|\cdot\|_\spnum$ norm of $\widetilde{Y}_l$ is well-bounded:
\begin{eqnarray*}
	\left\| \widetilde{Y}_l \right\|_\spnum 
	&=& \left\| |\ffvecu_l|^2 {X_l X_l^H} - |\ffvecu_l'|^2 X_l' (X_l')^H \right\|_\spnum
	\;\le\; 2 \max\left\{\max_{l} |\ffvecu_l|^2,\; \max_{l} |\ffvecu_l'|^2 \right\} \left\| X_l X_l^H \right\|_\spnum \\
	&=& \frac{2 C_2 \log \left( C_2' \nodenum \eta^{-1} \right)}{\nodenum} \sup_{\tiny \mbox{$y$ is $\spnum$-sparse}} \left\{\frac{y^H X_l X_l^H y}{y^H y} \right\} \\
	 & \le & \frac{2 C_2 \log \left( C_2' \nodenum \eta^{-1} \right)}{\nodenum} \sup_{\tiny \mbox{$y$ is $\spnum$-sparse}} \left\{\|X_l\|_\infty^2 \frac{\|y\|_1^2}{\|y\|_2^2}\right\} \\
	&\le& \frac{2 \spnum C_2 \log \left( C_2' \nodenum \eta^{-1} \right)}{\nodenum} \max_l \|X_l\|_\infty^2
	\;\le\; \frac{C\spnum \mu(\wavmat)^2 \log \left( C_2' \nodenum \eta^{-1} \right)}{\nodenum} := \zeta, 
\end{eqnarray*}
where in the last line we used the fact that the ratio between the $\ell_1$ and $\ell_2$ norms of an $\spnum$-sparse vector is $\spnum$, and the estimate we derived for $\max_l \|X_l\|_\infty^2$ in Appendix~\ref{app:BasicRIP}.

We now define a new random variable that is a truncated version of $\widetilde{Y}_l$ which takes for value 0 whenever we fall under event $F^{c}$, i.e.,
\begin{eqnarray*}
	Y_l := \widetilde{Y}_l \; \mathbb{I}_{F} = \xi_l \left( ||\ffvecu_l|^2 {X_l X_l^H} - |\ffvecu_l'|^2 X_l' (X_l')^H \right) \mathbb{I}_{F},
\end{eqnarray*}
where $\mathbb{I}_{F}$ is the indicator function of event $F_l$. 
If we define $Y = \sum_{l=1}^{M/2} Y_l$, then the random variables $Y$ (truncated) and $\widetilde{Y}$ (un-truncated) are related by~\citep{Tropp2009a} (see also Lemma 1.4.3 of~\citep{DeLaPena1999})
\begin{eqnarray}
	\proba{\|\widetilde{Y}\|_\spnum > u} \le \proba{\|Y\|_\spnum > u} + \proba{F^{c}}. 
	\label{eq:PwtY_PY}
\end{eqnarray}
When $\ffvecu, \ffvecu', X_l, X_l'$ are held constant so only the Rademacher sequence $\xi_l$ is random, then the contraction principle~\citep{Tropp2009a,ledoux1991probability} tells us that $\expect{\|Y\|_\spnum} \le \expect{\|\widetilde{Y}\|_\spnum}$.  Note that the sole reason for introducing the Rademacher sequences is for this use of the contraction principle. 
As this holds point-wise for all $\ffvecu, \ffvecu', X_l, X_l'$, we have 
\begin{eqnarray}
	\expect{\|Y\|_\spnum} \le \expect{\|\widetilde{Y}\|_\spnum}. 
	\label{eq:EY_EwtY}
\end{eqnarray} 

We now have all the necessary ingredients to apply Lemma~\ref{lem:tail_prob}. 
First, by choosing $\ripcd' \le \frac{1}{2}$, from Theorem \ref{thm:expect}, we have that $\expect{\|Z\|_\spnum} \le \ripcd'$ whenever $\nodenum \ge \frac{C_3 \spnum \mu(\wavmat)^2 \log^5 \memlen}{\ripcd'^2}$. 
Thus, by chaining \eqref{eq:EY_EwtY} and \eqref{eq:EwtY_EZ}, we have
\begin{eqnarray*}
	\expect{\|Y\|_\spnum} \le \expect{\|\widetilde{Y}\|_\spnum} \le 2 \expect{\|Z_1\|_\spnum} \le 2 \ripcd'.
\end{eqnarray*}
Also, with this choice of $\nodenum$, we have 
\begin{eqnarray*}
	\zeta = \frac{C \spnum \mu(\wavmat)^2 \log \left( C_2' \nodenum \eta^{-1} \right)}{\nodenum} \le \frac{C \ripcd'^2 \log \left( C_2' \nodenum \eta^{-1} \right)}{\log^5 \memlen}.
\end{eqnarray*} 
Using these estimates for $\zeta$ and $\expect{\|Y\|_\spnum}$, and choosing $u = \sqrt{\log \eta^{-1}}$ and $t = \log \eta^{-1}$, Lemma~\ref{lem:tail_prob} says that
\begin{eqnarray*}
	\proba{\|Y\|_\spnum > C' \left( 2 \ripcd' \sqrt{\log \eta^{-1}} + \frac{C \ripcd'^2 \log \left( C_2' \nodenum \eta^{-1} \right) \log \eta^{-1}}{\log^5 \memlen} \right)} \le 2 \eta. 
\end{eqnarray*}
Then, using the relation between the tail probabilities of $Y$ and $\widetilde{Y}$ \eqref{eq:PwtY_PY} together with our estimate for $\proba{F^{c}}$, we have
\begin{eqnarray*}
	 \proba{\|\widetilde{Y}\|_\spnum > C' \left( 2 \ripcd' \sqrt{\log \eta^{-1}} + \frac{C \ripcd'^2 \log \left( C_2' \nodenum \eta^{-1} \right) \log \eta^{-1}}{\log^5 \memlen} \right)} \le 2 \eta + \proba{F^{c}} \le 4 \eta.
\end{eqnarray*}
Finally, using the relation between the tail probabilities of $\widetilde{Y}$ and $Z$ \eqref{eq:PZ_PwtY}, we have
\begin{eqnarray*}
	\proba{\|Z_1\|_\spnum > 2 \ripcd' + 2 C' \ripcd' \sqrt{\log \eta^{-1}} + \frac{C C' \ripcd'^2 \log \left( C_2' \nodenum \eta^{-1} \right) \log \eta^{-1}}{\log^5 \memlen}}
	\le 8 \eta,
\end{eqnarray*}
where we used the fact that $\expect{\|Z_1\|_\spnum} \le \ripcd'$. 
Then, for a pre-determined conditioning $\ripcd \le \frac{1}{2}$, pick $\ripcd' =  \frac{\ripcd}{3C''\sqrt{\log \eta^{-1}}}$ for a constant $C''$ which will be chosen appropriately later. 
With this choice of $\ripcd'$ and with our assumptions that $\ripcd \le \frac{1}{2}$ and $\eta \le \frac{1}{e}$, the three terms in the tail bound becomes
\begin{eqnarray*}
	2 \ripcd' &=& \frac{\ripcd}{3C''\sqrt{\log \eta^{-1}}} \le \frac{1}{C''}\frac{\ripcd}{3}, \\
	2 C' \ripcd' \sqrt{\log \eta^{-1}} &=& \frac{2C'}{C''}\frac{\ripcd}{3}, \\
	\frac{C C' \ripcd'^2 \log \left( C_2' \nodenum \eta^{-1} \right) \log \eta^{-1}}{\log^5 \memlen} &=& \frac{C C' \ripcd^2 (\log (C_2' \nodenum) + \log \eta^{-1})}{9\left(C''\right)^2\log^5 \memlen} \\
	& \le & \frac{C C' (\log (C_2' \nodenum) + \log \eta^{-1})}{3\left(C''\right)^2\log^5 \memlen} \frac{\ripcd}{3}.
\end{eqnarray*}
As for the last term, if $\eta \ge \frac{1}{C_2' \nodenum}$, then $\frac{C C' (\log (C_2' \nodenum) + \log \eta^{-1})}{3\left(C''\right)^2\log^5 \memlen} \le \frac{2 C C'\log (C_2' \nodenum)}{3\left(C''\right)^2\log^5 \memlen} \le \frac{2C C'}{3\left(C''\right)^2}$ (where we further supposed that $N \ge O(1)$). 
If $\memlen^{-\log^4 \memlen} \le \eta \le \frac{1}{C_2' \nodenum}$ (where the lower bound is from the theorem assumptions), then  $\frac{C C' (\log (C_2' \nodenum) + \log \eta^{-1})}{3\left(C''\right)^2\log^5 \memlen} \le \frac{2 C C'\log \eta^{-1}}{3\left(C''\right)^2\log^5 \memlen} \le \frac{2C C'}{3\left(C''\right)^2}$. 
By choosing $C''$ appropriately large, we then have
\begin{eqnarray*}
	\proba{\|Z_1\|_\spnum > \frac{\ripcd}{3} + \frac{\ripcd}{3} + \frac{\ripcd}{3}} \le 8 \eta.
\end{eqnarray*}
Putting the formula for $\ripcd'$ into $\nodenum \ge \frac{C_3 \spnum \mu(\wavmat)^2 \log^5 \memlen}{\ripcd'^2}$ completes the proof. 
\end{proof}

\subsection{Derivation of recovery bound for infinite length inputs}
\label{app:HistErr}

In this appendix we derive the bound in Equation~\eqref{eqn:decay_bound} of the main text. The approach we take is to bound the individual components of Equation~\eqref{eqn:DecRec} of the main text.  As the noise term due to noise in the inputs is unaffected, we will bound the noise term due to the unrecovered signal (the first term in Equation~\eqref{eqn:DecRec} of the main text) by the component of the input history that is beyond the attempted recovery, and we will bound the signal approximation term (the second term in Equation~\eqref{eqn:DecRec} of the main text) by the quality of the signal recovery possible in the attempted recovery length.  In this way we can observe how different properties of the system and input sequence affect signal recovery. 

To bound the first term in Equation~\eqref{eqn:DecRec} of the main text (i.e., the omission errors due to inputs beyond the recovery window), we first write the current state at any time \memlenf as
\begin{gather}
	\nodevect{\memlenf} = \sum_{\tvar = 0}^{\memlenf} {\conmat^{\memlenf-\tvar}\ffvec\sigvart{\tvar}}. \nonumber
\end{gather}
We only wish to recover the past $\memlen\leq \memlenf$ time steps, so we break up the summation into components of the current state due to ``signal'' (i.e., signal we attempt to recover) and ``noise'' (i..e, older signal we omit from the recovery):
\begin{gather}
	\nodevect{\memlenf} = \sum_{\tvar=\memlenf-\memlen+1}^{\memlenf}{\conmat^{\memlenf-\tvar}\ffvec\sigvart{\tvar}}  + \sum_{n = 0}^{\memlenf-\memlen} {\conmat^{\memlenf-\tvar}\ffvec\sigvart{\tvar}} \nonumber	\\
	= \sum_{\tvar=\memlenf-\memlen+1}^{\memlenf}{\conmat^{\memlenf-\tvar}\ffvec\sigvart{\tvar}} + {\bm{\epsilon}} \nonumber \\
	= \ripmat\sigvec + {\bm{\epsilon}}_2. \nonumber
\end{gather}

From here we can see that the first summation is the matrix multiply $\ripmat\sigvec$ as is discussed in the paper. The second summation here, ${\bm{\epsilon}}_2$, essentially acts as an additional noise term in the recovery. We can further analyze the effect of this noise term by understanding that ${\bm{\epsilon}}_2$ is bounded for well behaved input sequences \sigvart{\tvar} (in fact all that is needed is that the maximum value or the expected value and variance are reasonably bounded) when the eigenvalues of \conmat are of magnitude $\decvar \leq 1$. We can explicitly calculate the worst case scenario bounds on the norm of $\bm{\epsilon}_2$,
\begin{eqnarray}
	\norm{\sum_{\tvar = 0}^{\memlenf-\memlen} {\conmat^{\memlenf-\tvar}\ffvec\sigvart{\tvar}}}_2 & \leq & \norm{\sum_{\tvar = 0}^{\memlenf-\memlen} {\conmatev\left(\decvar\conmated\right)^{\memlenf-\tvar}\conmatev^{-1}\ffvec\sigvart{\tvar}}}_2   \nonumber \\
		& \leq & \norm{\conmatev}_2\norm{\sum_{\tvar = 0}^{\memlenf-\memlen} {\left(\decvar\conmated\right)^{\memlenf-\tvar}\conmatev^{-1}\ffvec\sigvart{\tvar}}}_2, \nonumber
\end{eqnarray}
where $\conmated = \mbox{diag}(d_1,\hdots,d_{\nodenum})$ is the diagonal matrix containing the normalized eigenvalues of \conmat. If we assume that \ffvec is chosen as mentioned as in Section~\ref{sec:FiniteSTM} so that $\conmatev^{-1}\ffvec = \left(1/\sqrt{\nodenum}\right)\bm{1}$, the eigenvalues of \conmat are uniformly spread around a complex circle of radius \decvar, and that \sigvart{\tvar} $\leq$ \sigmax for all \tvar, then we can bound this quantity as
\begin{eqnarray}
	\norm{\sum_{\tvar = 0}^{\memlenf-\memlen} {\conmat^{\memlenf-\tvar}\ffvec\sigvart{\tvar}}}_2 & \leq & \frac{\sigmax}{\sqrt{\nodenum}}\norm{\conmatev}_2\norm{\sum_{\tvar = 0}^{\memlenf-\memlen} {\left(\decvar\conmated\right)^{\memlenf-\tvar}\bm{1}}}_2 \nonumber \\
	& = & \frac{\sigmax}{\sqrt{\nodenum}}\norm{\conmatev}_2\norm{ \left[ \begin{matrix} \sum_{\tvar = 0}^{\memlenf-\memlen} \decvar^{\memlenf-\tvar}d_{1}^{\memlenf-\tvar} \\ \vdots \\ \sum_{\tvar = 0}^{\memlenf-\memlen} \decvar^{\memlenf-\tvar}d_{\nodenum}^{\memlenf-\tvar} \end{matrix} \right] }_2 \nonumber \\
	& = & \frac{\sigmax}{\sqrt{\nodenum}}\norm{\conmatev}_2 \sqrt{ \sum_{k = 1}^{\nodenum}\left| \sum_{\tvar = 0}^{\memlenf-\memlen}\decvar^{\memlenf-\tvar} d_{k}^{\memlenf-\tvar} \right|^2 } \nonumber \\
	& \leq & \sigmax\norm{\conmatev}_2 \left| \sum_{\tvar = 0}^{\memlenf-\memlen}\decvar^{\memlenf-\tvar} \right| \nonumber \\
	& \leq & \sigmax\norm{\conmatev}_2 \left| \frac{\decvar^{\memlen} - \decvar^{\memlenf}}{1 - \decvar} \right| \nonumber
\end{eqnarray}
where $d_{k}$ is the $k^{th}$ normalized eigenvalue of \conmat. 
In the limit of large input signal lengths ($\memlenf \rightarrow \infty$), we have $\memlenf \gg \memlen$ and so $q^\memlen \gg q^{\memlenf}$, which leaves the approximate expression 
\begin{gather}
	\norm{{\bm{\epsilon}}_2}_2 \leq \sigmax\norm{\conmatev}_2 \left| \frac{\decvar^{\memlen}}{1 - \decvar} \right|. \nonumber
\end{gather}

To bound the second term in Equation~\eqref{eqn:DecRec} of the main text (i.e., the signal approximation errors due to imperfect recovery), we must characterize the possible error between the signal (which is \spnum-sparse) and the approximation to the signal with the \spnumf largest coefficients.  In the worst case scenario, there are $\spnum-\spnumf+1$ coefficients that cannot be guaranteed to be recovered by the RIP conditions, and these coefficients all take the maximum value \sigmax.  In this case, we can bound the signal approximation error as stated in the main text:
\begin{eqnarray}
	\frac{\beta}{\sqrt{\spnumf}}\left\|\bm{s} - \bm{s}_{S^{\ast}} \right\|_1 & \leq & \frac{\beta}{\sqrt{\spnumf}}\sum_{\tvar = \spnumf+1}^{\spnum}\left|\decvar^{\tvar}\sigmax\right| \nonumber \\
	& = & \frac{\beta\sigmax}{\sqrt{\spnumf}}\left(\frac{\decvar^{\spnumf} - \decvar^{\spnum}}{1 - \decvar}\right).	 \nonumber
\end{eqnarray}

In the case where noise is present, we can also bound the total power of the noise term, 
\begin{gather}
	\alpha\left\|\sum_{k = 0}^{\memlen+\memlen^{\ast}} \conmat^{k}\ffvec\widetilde{\epsilon}[k]\right\|_2^2, \nonumber
\end{gather}
using similar steps. Taking $\epsilon_{\max{}}$ as the largest possible input noise into the system, we obtain the bound
\begin{gather}
	\alpha\left\|\sum_{k = 0}^{\memlen + \memlenf} \conmat^{k}\ffvec\widetilde{\epsilon}[k]\right\|_2^2 < \alpha \epsilon_{\max{}} \norm{\conmatev}_2 \left| \frac{\decvar}{1 - \decvar} \right| \nonumber
\end{gather}

\bibliographystyle{neuron}           
\bibliography{NNmem,RWL1,MPbib,CSbib,BPDNBIB,Dynamicbib,Kalmanbib,AudioBib,Adam_Bibliography2,refs,SIref}




\end{document}